\documentclass[sigconf, nonacm]{aamas}

\usepackage{wrapfig}
\usepackage{balance}

\usepackage{amsmath}
\usepackage{algorithm}
\usepackage{algpseudocode}
\usepackage{orcidlink}

\usepackage{tikz}
\usetikzlibrary{automata, arrows.meta, shapes.geometric, positioning}

\newtheorem{definition}{Definition}
\newtheorem{theorem}{Theorem}
\newtheorem{proposition}{Proposition}

\newcommand{\bsigma}{\bar{\sigma}}
\newcommand{\btau}{\bar{\tau}}

\renewcommand{\Pr}{\mathbb{P}}

\DeclareMathOperator{\AS}{{\mathit{AS}}}
\DeclareMathOperator{\NZ}{{\mathit{NZ}}}

\def\set#1{\{ #1 \}}
\def\Diam{\lozenge}

\newif\ifdraft\draftfalse
\ifdraft
\def\RM#1{{\textcolor{red}{\textbf{RM:} #1}}}
\def\AL#1{{\textcolor{magenta}{\textbf{AL:} #1}}}
\def\MG#1{{\textcolor{blue}{\textbf{MG:} #1}}}
\else
\def\RM#1{}
\def\AL#1{}
\def\MG#1{}
\fi

\newcommand{\longshort}[2]{#1} 

\newcommand{\OMIT}[1]{}

\newcommand{\Sem}[2]{[\![#1]\!]_{#2}}

\title{Solving Qualitative Multi-Objective Stochastic Games}

\author{Moritz Graf\orcidlink{0009-0001-6561-2046}}
\affiliation{
  \institution{RPTU University Kaiserslautern-Landau \& Max Planck Institute for Software Systems}
  \city{Kaiserslautern}
  \country{Germany}}
\email{moritz.graf@cs.rptu.de}

\author{Anthony Lin\orcidlink{0000-0003-4715-5096}}
\affiliation{
  \institution{RPTU University Kaiserslautern-Landau \& Max Planck Institute for Software Systems}
  \city{Kaiserslautern}
  \country{Germany}}
\email{awlin@mpi-sws.org}

\author{Rupak Majumdar\orcidlink{0000-0003-2136-0542}}
\affiliation{
  \institution{Max Planck Institute for Software Systems}
  \city{Kaiserslautern}
  \country{Germany}}
\email{rupak@mpi-sws.org}

\begin{abstract}
    Many problems in compositional synthesis and verification of multi-agent systems---such as rational verification and assume-guarantee verification in probabilistic systems---reduce to reasoning about two-player multi-objective stochastic
    games.
    This motivates us to study the problem of characterizing the complexity and memory requirements for two-player stochastic games with Boolean combinations of 
    qualitative reachability and safety objectives.
    Reachability objectives require that a given set of states is reached; safety requires that a given set is invariant.
    A qualitative winning condition asks that an objective is satisfied almost surely (AS) or (in negated form) with non-zero (NZ) probability.

    We study the determinacy and complexity landscape of the problem. 
    We show that games with conjunctions of AS and NZ reachability and safety objectives are determined, and determining the winner is PSPACE-complete.
    The same holds for positive boolean combinations of AS reachability and safety, as well as for negations thereof.
    On the other hand, games with full Boolean combinations of qualitative objectives are not determined, and are NEXPTIME-hard.
    Our hardness results show a connection between
    stochastic games and logics with partially-ordered quantification.
    Our results shed light on the relationship between determinacy and complexity, and extend the complexity landscape for stochastic games in the multi-objective setting.
\end{abstract}

\keywords{Stochastic games, Temporal logic specifications, Reachability, Qualitative objectives, Multi-objective games, Complexity}

\begin{document}
\sloppy


\pagestyle{fancy}
\fancyhead{}


\maketitle 

\section{Introduction}
\begin{table}[t]
\begin{tabular}{c|c|c|c|c}
 $\phi$ & queries & objectives & determined? & complexity \\\hline
 $\bigwedge$ & $AS,NZ$ & $\lozenge,\square$ & yes & PSPACE-complete\\\hline
 $\bigvee$ & $AS,NZ$ & $\lozenge,\square$ & yes & PSPACE-complete \\\hline
 $\mathcal{B}^+$ & $AS$ & $\lozenge,\square$ & yes & PSPACE-complete\\\hline
 $\mathcal{B}^+$ & $NZ$ & $\lozenge,\square$ & yes & PSPACE-complete\\\hline\hline
 $\mathcal{B}^+$ & $AS,NZ$  & $\lozenge,\square$ & no & NEXPTIME-hard\\\hline
 $\mathcal{B}^+$ & \multicolumn{2}{|c|}{$AS\lozenge,\ AS\square,\ NZ\lozenge$} & no & NEXPTIME-complete\\\hline
\end{tabular}
\caption{Main results\label{tab:main}.
Queries that use only conjunctions ($\bigwedge$) or only disjunctions ($\bigvee$) are PSPACE-complete, as are positive Boolean combinations ($\mathcal{B}^+$) of only almost sure ($AS$) or only nonzero ($NZ$) queries. Queries that use conjunctions, disjunctions, and both AS and NZ are NEXPTIME-hard.}
\end{table}

Stochastic games are a class of games consisting of multiple players, in which the environment exhibits stochastic behavior. 
The 2-player version of the game (often called $2\frac{1}{2}$-player games) has especially been extremely useful in modeling many problems in verification, including rational verification in multi-agent systems \cite{kr21,StochasticLossyChannel}, wherein Player 1 represents a deviating player and Player 2 represents a coalition of players who aim to punish the deviating player. However, in many applications of 2-player stochastic games, it is known that \emph{multiple} objectives (i.e. a Boolean combination of them) are necessary (cf. \cite{kr21,OnStochasticGames,DisjunctionSG,prism-games2}).

Many basic problems on solving 2-player stochastic games with multiple objectives are still open. 
Chen et al. \cite{OnStochasticGames} studied this problem, focusing primarily on \emph{quantitative} objectives, i.e., a Boolean combination of expected total reward objectives. 
They proved that such games are not determined. \emph{Determinacy} is the property that for every game and every winning objective, either player 1 wins the winning objective
or player 2 wins the negation of the objective;
it is a generalization of the minimax theorem for two player one-shot games. 
They also show that if players are restricted to deterministic strategies, 
deciding if a player has a winning strategy becomes undecidable. 
For general strategies, PSPACE-hard was shown, but decidability remains open.

Stan et al.\ \cite{StochasticLossyChannel} and Winkler and Weininger
\cite{DisjunctionSG} studied multi-objective stochastic games with
\emph{qualitative} reachability objectives (requiring that a set of states is
reached) and safety objectives (requiring that a given set of states is never
left).
Further, a qualitative condition requires that the underlying reachability or safety objective is 
satisfied almost surely (with probability one; also written $\mathit{AS}$) or with non-zero probability (also written $\mathit{NZ}$).
Thus, winning conditions are general Boolean formulas over the propositions 
\[
(\set{\mathit{AS}, \mathit{NZ}} \times \set{\Diam, \Box}) F,
\]
where $F$ ranges over sets of states, $\Diam$ and $\Box$ denote reachability and safety, respectively, and $\mathit{AS}$ and $\mathit{NZ}$
denote almost surely or non-zero, respectively.

This qualitative setting is, in fact, sufficient for many applications.
For example, this is the case for \emph{liveness} verification for probabilistic distributed protocols \cite{LLMR17,LR16,Lynch-book,Fokkink-book}, e.g., whether a philosopher will eventually eat with probability 1 in a dining philosopher protocol. Such qualitative objectives were also considered in rational verification problems in various settings \cite{StochasticLossyChannel,kr21}. For this qualitative setting, \cite{DisjunctionSG} show that disjunctions of almost sure reachability can be solved in polynomial time, and are between PSPACE and EXPTIME for deterministic strategies. In \cite{StochasticLossyChannel} it is shown that the case with conjunctions of almost sure and nonzero objectives is decidable, and is in EXPTIME. Decidability for the general case with an arbitrary Boolean combination remains an open problem. 

\paragraph{Contributions.} In this paper, we carry out a systematic investigation of the determinacy and complexity of two-player stochastic games with multiple qualitative objectives. 

Our starting point is an observation that different applications give rise to different \emph{classes} of Boolean formulas. For example, in the reduction \cite{kr21,StochasticLossyChannel} from the problem of rational verification for probabilistic systems to two-player stochastic games, one obtains only a \emph{conjunctive} Boolean formula. 
Similarly, in assume-guarantee reasoning, it has been noted in \cite{MMSZ20,OnStochasticGames} that the required Boolean formulas are of the form
\[
    \bigwedge_i \neg \varphi_i \vee \psi_i,
\]
where $\varphi_i$ (resp. $\psi_i$) encodes the required assumption (resp. guaranteed post-condition). 
For these reasons, it makes sense to investigate the problem by \emph{varying the allowed Boolean operators} ($\wedge,\vee,\neg$) in the objectives. Moreover, owing to the duality of $\mathit{AS}(\Diam F)$ and $\mathit{NZ}(\Box F)$
(and similarly the duality of $\mathit{AS}(\Box F)$ and $\mathit{NZ}(\Diam F)$), we may allow only $\wedge,\vee$ (i.e., dispense with negation) and instead vary the individual propositions that are allowed.

For the general case (without restricting the Boolean formulas), we show that such games are not determined. This improves the result in \cite{OnStochasticGames} 
that two-player stochastic games with multiple \emph{quantitative} objectives are not determined. As for the problem of deciding if player 1 has a winning 
strategy, we do not know if this is decidable, but we show a new NEXPTIME lower bound, improving the PSPACE-hardness from \cite{OnStochasticGames}. 
This exploits a connection between dependency quantified Boolean formulas \cite{HenkinDQBF} and stochastic games.

Table \ref{tab:main} summarizes our results for different subclasses of formulas.
For the special case of Boolean combinations of non-zero reachability, we can show decidability and a matching NEXPTIME upper bound.
Our proof uses a characterization of optimal policies: we show that if player 1 has a winning strategy, then player 1 has a winning strategy that uses
exponential memory.

We explore the determinacy boundary.
We show that games with positive Boolean combinations of \emph{only} AS or \emph{only} NZ objectives are determined, as are games where the formula is a pure disjunction or a pure conjunction.
We complement this with a PSPACE upper bound to determine the winner, which matches a PSPACE-hardness inherited from multiple (nonstochastic)
reachability games \cite{fijalkow2010surprizing} or from multi-objective reachability in MDPs \cite{PercentileQueriesMDP}.

\smallskip
\noindent
\textbf{Related Work.}
Very few results were known for multi-objective stochastic games before our work.
This is in contrast to \emph{single-objective} stochastic games---the winning condition is exactly one temporal objective---for which
there is a well-developed algorithmic theory
\cite{Krish-papers-qualitative-and-quantitative,dAHK99,dAH00}.
It is also in contrast to multi-objective problems on Markov decision processes (single-player stochastic games),
where algorithmic solutions are known for Boolean combinations of quantitative LTL objectives \cite{MultiObjectiveMDP}, for percentile queries \cite{PercentileQueriesMDP}, or for combinations probabilistic and non-probabilistic objectives \cite{MixingMDP}.

In addition to the related work on stochastic games that we have discussed earlier, we also mention the work of \cite{ashok2020approximating,DecidabilitySG}. While in general it is only known that the Pareto set for conjunctions of reachability objectives can be approximated \cite{ashok2020approximating}, it is shown in \cite{DecidabilitySG} that total reward is decidable for stopping games with two objectives. Notably, \cite{DecidabilitySG} also give an exponential time algorithm to compute the Pareto set exactly, assuming determinacy.

\paragraph*{Organization.} We define stochastic games in Section \ref{sec:prelim}. 
We provide the proofs of our results for determined (resp. nondetermined) queries in Section \ref{sec:det} (resp. \ref{sec:nondet}). 
Missing proofs are in the \longshort{appendix}{full version}.

\section{Preliminaries}
\label{sec:prelim}
We introduce the setting of \emph{turn-based} stochastic games \cite{Shapley,FilarVrieze,ChatterjeeHenzinger2012}.

\begin{definition}[Stochastic Games]
    A stochastic game is a tuple
    $\mathcal{G} = \langle S, s_0, A, P\rangle$, where
    $S$ is a finite set of states, partitioned into disjoint subsets $S_1$, $S_2$, and $S_c$ controlled by player 1, player 2, and the chance player,
    respectively. Thus, $S = S_1 \cup S_2 \cup S_c$.
    $s_0 \in S$ is an initial state.
    The map $A: S_1 \cup S_2 \rightarrow 2^S\setminus \set{\emptyset}$ maps player 1 and 2 states to possible successors in $S$.
    The map $P: S_c \rightarrow \Delta(S)$ maps states in $S_c$ to a probability distribution over $S$.
\end{definition}
    
    A game starts in the initial state $s_0$ and proceeds in rounds.
    In each round, if the current state is in $S_i$, for $i\in\set{1,2}$ then player $i$ picks a successor $s'$ from $A(s)$ and the new state is $s'$.
    If the current state is in $S_c$, the game moves to a new state $s'\in S$ with probability $P(s)(s')$.
    A state $s$ with $A(s) = \{s\}$ is called a terminal state.
    Note that these games are turn based: in each round, exactly one player decides the next state.

For $s\in S$, we use $\mathcal{G}_s$ to denote the game that is identical to $\mathcal{G}$ but starts at $s$ instead of $s_0$.
We draw a stochastic game $\mathcal{G}$ as a directed graph, with nodes $s\in S$, denoted by a square if $s\in S_1$, a diamond if $s\in S_2$ or a circle if $s\in S_c$, 
and edges $(s, s')$ if $s'\in A(s)$ or $P(s)(s') \neq 0$. 
We will usually omit the self loops on terminal states. 
Figure~\ref{fig:nondetermined} is a stochastic game with $S_1 = \{s_0\}$, $S_2 = \{s_1\}$, $S_c = \{s_2, s_3, s_4\}$.
\begin{figure}[t]
    \centering
    \begin{tikzpicture}[node distance = 1.2cm, on grid, auto]
        \node (s1) [state, rectangle] {$s_0$};
        \node (s2) [state, diamond, above right = of s1] {$s_1$};
        \node (s3) [state, below right = of s1] {$s_2$};
        \node (s4) [state, above right = of s2] {$s_3$};
        \node (s5) [state, below right = of s2] {$s_4$};

        \path[-stealth, thick]
            (s1) edge (s2) 
            (s1) edge (s3)

            (s2) edge (s4)
            (s2) edge (s5);
    \end{tikzpicture}
    \caption{Example stochastic game.}
    \label{fig:nondetermined}
\end{figure}
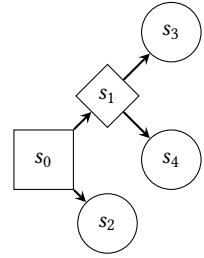

\begin{definition}[Strategies]
    A \emph{strategy} for player 1 in $\mathcal{G}$ is a function
    $\sigma: S^*S_1 \to \Delta(S)$ with
    $\mathit{supp}(\sigma(\pi s)) \subseteq A(s)$, for all $\pi s \in S^*S_1$,
    where $\Delta(S)$ is the set of distributions over $S$ and $\mathit{supp}(d)$ is the support of a distribution $d\in\Delta(S)$.
    Strategies for player 2 are defined symmetrically as functions $\tau: S^*S_2 \to \Delta(S)$.
\end{definition}

When both players fix their strategies, the game induces a Markov chain, where the successors at any state are picked according to the
strategies (or according to $P$ for chance states).
Formally, specifying a pair of strategies $\sigma, \tau$ induces an infinite Markov Chain
$\mathcal{M}_\mathcal{G}^{\sigma, \tau}$, whose nodes are paths in $\mathcal{G}$ starting from $s_0$ following the strategies $\sigma$ and $\tau$ (branching is caused by probabilistic actions). See \cite{chen2012playing} for more details.
The runs of this Markov chain are called \emph{plays} of $\mathcal{G}$ according to $\sigma, \tau$.
We write $\mathcal{P}_{\mathcal{G}}^{\sigma, \tau}$ for the probability measure over these plays, defined by the usual cylinder sets \cite{chen2012playing}.

An \emph{objective} is a set of plays.
We will focus on \emph{reachability} and \emph{safety} objectives.
Let $T\subseteq S$.
The \emph{reachability objective} with target $T$, written $\lozenge T$, is the set of plays in $S^\omega$ that reach $T$.
Its dual,
the \emph{safety objective}, written $\Box T$, is the set of plays that always remain in $T$:
$$\lozenge T = \{\pi \in S^\omega | \exists i: \pi_i\in T\} \quad\mbox{ and }\quad \square T = \set{\pi \in S^\omega \mid \forall i: \pi_i \in T} $$
Note that $\pi \in \lozenge T \iff \pi \not\in \square S\setminus T$.

Fix a game $\mathcal{G}$ and strategies $\sigma$ and $\pi$.
An objective $X\subseteq S^\omega$ is realized \emph{almost surely}, written $\sigma, \tau\models_{\mathcal{G}} \AS X$, 
iff $X$ holds with probability one: $\Pr^{\sigma, \tau}_\mathcal{G}(X) = 1$.
An objective $X$ is realized with \emph{nonzero probability},
written $\sigma, \tau\models_{\mathcal{G}} \NZ X$,
iff $\Pr^{\sigma, \tau}_\mathcal{G}(X) > 0$.

We extend $\AS$ and $\NZ$ queries to Boolean combinations,
with the obvious semantics.
For example, $\sigma, \tau\models_{\mathcal{G}} \AS X \vee \AS Y$
iff $\sigma, \tau\models_{\mathcal{G}} \AS X$
or
$\sigma, \tau\models_{\mathcal{G}} \AS Y$.
We refer to such a Boolean combination as a \emph{query}.

In the following we only consider queries where
the base objectives are either reachability or safety.
Note that for any objective $X$, we have
$\sigma, \tau \models_{\mathcal{G}} \AS X$ iff
$\sigma, \tau\not\models_{\mathcal{G}} \NZ (S^\omega\setminus X)$.
Together with the duality between reachability and safety, 
we see that queries are closed under complementation.

We refer to nontrivial 
Boolean combinations as \emph{multi-objective queries}
and an atomic $\AS$ or $\NZ$ query as a \emph{single-objective query}.

Given a game $\mathcal{G}$ and query $\phi$, 
a strategy $\sigma$ of player 1 is a \emph{winning} strategy
if and only if for every strategy $\tau$ of player 2, 
$\phi$ is satisfied, that is,
$\sigma,\tau\models_{\mathcal{G}} \phi$ holds.
Player 1 is \emph{winning} if they have a winning strategy:
$\exists\sigma\forall\tau: \sigma, \tau \models_{\mathcal{G}} \phi $.
    
Player 2 is \emph{winning} if they have a winning strategy for the negated objective
$\exists\tau\forall\sigma: \sigma, \tau \models_{\mathcal{G}} \lnot\phi $.

For a given query $\phi$, the set $\Sem{\phi}{1} = \{s\in S| \exists\sigma\forall\tau \sigma, \tau \models_{\mathcal{G}_s} \phi \}$ of all states $s$ such that player 1 has a winning strategy if the game starts at $s$ is called the \emph{winning region} of $\phi$. 

A query $\phi$ is \emph{determined} 
if in every game $\mathcal{G}$ with query $\phi$, either player 1 is winning or player 2 is winning.
(We assume the queries and games share the same set of states.)
Determinacy is a non-trivial property of stochastic games.
This is because the logical negation of ``player 1 is winning''
is that $\forall \sigma \exists \tau: \sigma, \tau \models_{\mathcal{G}} \lnot \phi$,
which only ensures that player 2 has a \emph{spoiling strategy}
for any player 1 strategy.
This does not mean that player 2 has a single winning strategy for every player 1 strategy.
The following results hold.

\begin{proposition}[Determinacy]
\begin{enumerate}
\item \cite{dAHK99,dAH00} Games with $\AS X$ and $\NZ X$ objectives for safety and reachability objectives $X$ are determined.
\item {\textsf{[Determinacy Argument]}} A query $\phi$ is determined if 
there exists a determined query $\phi'$ such that for every game $\mathcal{G}$:
        \begin{itemize}
            \item[-] If player 1 wins $\phi'$ then they win $\phi$
            \item[-] If player 2 wins $\phi'$ then they also win $\phi$.
        \end{itemize}
\end{enumerate}
\end{proposition}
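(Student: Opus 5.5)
Part (1) is cited from \cite{dAHK99,dAH00}, so I will not reprove it. The plan is to recall why it holds and then give a short argument for part (2).

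For (1), the classical fact is that single-objective qualitative reachability and safety games are determined with positional strategies. Almost-sure reachability of $T$ is solved by the usual nested fixpoint, i.e.\ iterated attractor computations that remove states from which player 2 can avoid $T$ with positive probability. This yields a region $W$ together with a memoryless player 1 strategy winning from $W$, and a memoryless player 2 spoiling strategy on $S\setminus W$. Nonzero reachability reduces to ordinary graph attractors: chance states are treated as existential, since any positive-probability edge suffices. Safety is handled by the duality $\AS(\Box T) \equiv \lnot \NZ(\Diam (S\setminus T))$ and $\NZ(\Box T)\equiv\lnot\AS(\Diam(S\setminus T))$. Under this duality, player 1 winning one query is the same as player 2 winning its negation. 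Hence determinacy of the four atomic query types follows from determinacy of the two reachability types.

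For (2), fix a game $\mathcal{G}$. Since $\phi'$ is determined, either player 1 wins $\phi'$ or player 2 wins $\phi'$, that is, player 2 has a strategy $\tau$ with $\sigma,\tau\models\lnot\phi'$ for all $\sigma$. In the first case the first hypothesis gives that player 1 wins $\phi$. In the second case the second hypothesis gives that player 2 wins $\phi$, meaning player 2 has a single strategy realizing $\lnot\phi$ against every $\sigma$. So one of the two players wins $\phi$ in every game, which is exactly determinacy of $\phi$.

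The only subtle point is reading the second bullet correctly: ``player 2 wins $\phi$'' must be interpreted as $\exists\tau\forall\sigma:\sigma,\tau\models\lnot\phi$, in line with the paper's definition of player 2 winning. With that reading, the argument is a pure case split. The main obstacle therefore lies entirely in part (1), namely the fixpoint characterization of almost-sure reachability, and for that I will defer to the cited works.
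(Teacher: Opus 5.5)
Your part (2) is the paper's argument: a case split on which player wins the determined query $\phi'$, with ``player 2 wins'' read as $\exists\tau\forall\sigma:\sigma,\tau\models\lnot\phi$.

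For part (1) you take a different route. The paper justifies it in one line by Martin's determinacy theorem for Blackwell games with Borel payoffs \cite{Martin98}. You instead recall the attractor/fixpoint characterizations of the $\AS(\lozenge T)$ and $\NZ(\lozenge T)$ winning regions and obtain safety by duality. Martin's theorem covers all Borel objectives at once, but what it gives directly is existence of the value (equality of $\sup\inf$ and $\inf\sup$). To reach the qualitative dichotomy ``$\exists\sigma\forall\tau\,\phi$ or $\exists\tau\forall\sigma\,\lnot\phi$'' one also needs the extremal values to be attained: value $1$ for an $\AS$ query by a player 1 strategy, and value $0$ for an $\NZ$ query by a player 2 strategy. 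This holds for finite turn-based reachability and safety games, but it is an additional fact. Your route is specific to reachability and safety, yet it yields the dichotomy directly, with memoryless witnesses for both players.

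Two small fixes to your write-up:
\begin{itemize}
\item On $S\setminus W$, what the fixpoint provides, and what you need, is a single $\tau$ that works against every $\sigma$, i.e.\ a winning strategy for the complement. Calling it a ``spoiling'' strategy clashes with the paper, which uses that word for the weaker $\forall\sigma\exists\tau$ notion, and that notion would not suffice.
\item The sentence ``player 1 winning one query is the same as player 2 winning its negation'' is only right after exchanging the players' roles. For $\AS(\square T)$, player 1 is the avoiding side of $\NZ(\lozenge(S\setminus T))$. This is harmless, since your reachability arguments apply with either player as the reacher.
\end{itemize}
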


The first result follows from a general determinacy theorem for stochastic games for all Borel objectives \cite{Martin98}.
The second result follows because either player 1 or player 2 wins the game with query $\phi'$, and this determines who
wins the game with query $\phi$.

In our proofs we will sometimes restrict the game $\mathcal{G}$ to only a subset of $S$, replacing all other states with a new terminal state $s_\bot$.
\begin{definition}[restricted game]
    Let $\mathcal{G}$ be a stochastic game, and $U\subseteq S$ be a subset of states. We define game $\mathcal{G}| U$ restricted to $U$ as:
    $\langle U\cup \{s_\bot\}, s_0', A', P'\rangle$ where $s_\bot$ is a new terminal state, with:
    \begin{flalign*}
    s_0' & = s_0 \textrm{ if } s_0 \in U,\ s_0' = s_\bot \textrm{ otherwise } & \\
    A'(s) & = A(s) \textrm{ if } A(s) \subseteq U,\ A'(s) = (A(s) \cap U) \cup {s_\bot} \textrm{ otherwise } & \\
    P'(s)(s') & = P(s)(s') \textrm{ if } s' \in U,\ P'(s)(s_\bot) = \sum_{s'\not\in U:} P(s)(s') & 
    \end{flalign*}
    If $X$ is an objective with target set $T$ in $\mathcal{G}$ then we consider it in $\mathcal{G}| U$ with the target set $T' = T\cap U$.
\end{definition}

\section{Determined Queries}
\label{sec:det}

In this section, we prove the following main theorem.

\begin{theorem}
\label{th:determinacy}
Consider the following class of queries:
\begin{enumerate}
\item Conjunctions of $\AS$ and $\NZ$ queries for both
reachability and safety objectives;
\item Positive Boolean combinations of $\AS$ queries for
reachability and safety objectives.
\end{enumerate}
Both classes of queries are determined.
Further, deciding if player 1 wins a game is PSPACE-complete.
\end{theorem}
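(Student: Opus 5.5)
The plan is to handle both classes with the \textsf{[Determinacy Argument]} of the Determinacy proposition. For each query $\phi$ I will build an auxiliary game in which a \emph{single} $\AS$ or $\NZ$ reachability/safety objective $\phi'$ is won by player 1 (resp.\ player 2) exactly when player 1 wins $\phi$ (resp.\ player 2 wins $\neg\phi$). Part 1 of that proposition then makes $\phi'$ determined, so $\phi$ is determined too.

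The common first step is a product construction. Let the reachability targets be $T_1,\dots,T_k$ and the safety sets $U_1,\dots,U_m$. I take $\mathcal{G}\times 2^{[k]}\times 2^{[m]}$, which records which targets have been visited and which safety sets have been violated. In this product every base objective becomes a property of the current (absorbing) memory component, so it is itself a reachability or safety objective. Several simplifications follow:
\begin{itemize}
\item $\AS\Box U\wedge\AS\Box U'=\AS\Box(U\cap U')$.
\item A conjunction of $\AS$ reachability and safety objectives is a single $\AS$ reachability objective in the product, namely reaching ``all $T_i$ visited and no $U_j$ violated''. The safety part is absorbed by making violating product states losing sinks.
\end{itemize}
The product has exponentially many states, but each state is polynomial-size and successors are computable locally. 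This is what I intend to use for the PSPACE upper bound: the single-objective algorithms are attractor/fixpoint computations, and I would run them on the fly, recomputing membership recursively in polynomial space. PSPACE-hardness is inherited from conjunctions of reachability in non-stochastic games \cite{fijalkow2010surprizing}, which are already conjunctions of $\AS$ reachability.

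\emph{Conjunctions.} Write the query as $\AS\psi\wedge\bigwedge_{i}\NZ\chi_i$, where $\AS\psi$ collects all $\AS$ conjuncts as above. The key observation is that player 1 can split the $\NZ$ conjuncts. Suppose $\sigma_i$ wins $\AS\psi\wedge\NZ\chi_i$ for each $i$. By Kuhn's theorem (perfect recall), the uniform mixture of the $\sigma_i$ is realised by a behavioural strategy $\sigma$. Against any $\tau$, the probability of $\psi$ under $\sigma$ is an average of ones, hence one, and each $\chi_i$ keeps positive weight. Conversely, if player 2 wins $\neg(\AS\psi\wedge\NZ\chi_i)$ for some $i$, they win the negation of the whole conjunction. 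So it suffices to treat $\AS\psi\wedge\NZ\chi$ with a single $\NZ$ conjunct.

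Let $W$ be the almost-sure winning region of $\psi$ in the product; it is determined, with a memoryless attractor-based strategy. The two cases for $\chi$ are as follows.
\begin{itemize}
\item For $\chi=\Diam T$, I claim player 1 wins iff they win $\NZ\Diam(T\cap W)$ in $\mathcal{G}|W$. Player 1 reaches $T$ with positive probability without leaving $W$, then switches to the $\AS\psi$ strategy. If player 2 wins the restricted query, they either force leaving $W$, which spoils $\AS\psi$, or keep the play away from $T$ almost surely.
\item For $\chi=\Box U$, I first compute the region $W'\subseteq W$ from which player 1 can almost surely achieve $\psi$ while \emph{surely} staying in $U$; this is again a single $\AS$ objective in the product. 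The claim is then that player 1 wins iff they win $\NZ\Diam W'$ in $\mathcal{G}|(W\cap U)$, with the converse argued as in the previous case.
\end{itemize}

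\emph{Positive Boolean combinations of $\AS$.} I would bring the formula into disjunctive normal form, $\bigvee_j\AS Y_j$, where each $Y_j$ is a conjunction and hence a single product reachability objective $\Diam R_j$. Disjunction does not commute with $\AS$, because the disjunct that holds must be the same on all positive-probability branches. I therefore plan a recursive characterisation on the product. Define $\mathit{Win}(s)=\{j : s\in\Sem{\AS\Diam R_j}{1}\}$. At player-1 states player 1 takes the union over successors, at player-2 states player 2 takes the intersection, and at chance states the value is the intersection over the support. Player 1 wins iff $\mathit{Win}(s_0)\neq\emptyset$ once this is iterated to a fixpoint that lets player 1 postpone committing to a disjunct until a chance branching forces it. Player 2's counter-strategy is built from the same fixpoint: at each point they play a single strategy against all currently relevant $j$ simultaneously, using determinacy of each single $\AS\Diam R_j$ via spoiling strategies for its negation.

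I expect the main obstacle to be precisely this step: proving that this fixpoint characterisation is correct. In particular, player 2 must combine spoiling strategies for the different disjuncts into one strategy, and I need to show that an uncommitted player 1 cannot exploit infinite plays that never pass through a chance branching.
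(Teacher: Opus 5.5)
Your proposal has two genuine gaps: the second class is left unproved, and your reduction for the $\NZ\Box U$ conjunct is false.

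\textbf{Positive Boolean combinations of $\AS$ (second class).} You reduce to $\bigvee_{j=1}^{n}\AS Y_j$ and then leave the correctness of an unspecified fixpoint open. That fixpoint is unnecessary. The missing idea is the dual of the mixing argument you already use for the $\NZ$ conjuncts. Suppose player 1 wins no single $\AS Y_j$. By single-objective determinacy, player 2 then has strategies $\tau_j$ winning $\NZ(\neg Y_j)$. Let $\tau$ pick one $\tau_j$ uniformly at random at the start. Then $\Pr^{\sigma,\tau}(\neg Y_j)\ge\frac{1}{n}\Pr^{\sigma,\tau_j}(\neg Y_j)>0$ for every $\sigma$ and every $j$, so $\tau$ wins $\neg\bigvee_j\AS Y_j$. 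Hence the disjunction is determined, and player 1 wins it iff they win some single disjunct. This is Corollary~\ref{cor:disj-as}, obtained by negating Lemma~\ref{lem:conj-nz}. PSPACE membership follows by guessing a disjunct and solving a conjunction of $\AS$ queries. Your intuition that player 1 might profit from postponing the choice of disjunct is mistaken: player 2's randomisation defeats any such postponement. The obstacle you single out therefore disappears.

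\textbf{The $\NZ\Box U$ case of the first class.} Your characterisation via $W'$ fails because $W'$ demands \emph{sure} safety. Here is a counterexample in which player 1 has no moves and $U=\set{z,c,g}$. Let $z$ be a player-2 state with a self-loop and an edge to a chance state $c$. Let $c$ move with probability $\frac{1}{2}$ each to absorbing states $g\in U$ and $r\notin U$. Take the query $\AS\Diam\set{z}\wedge\NZ\Box U$ from $z$. Player 1 wins: if player 2 leaves $z$ with total probability $q$, then $\Pr(\Box U)\ge(1-q)+\frac{q}{2}\ge\frac{1}{2}$. Yet $W'$ contains only $g$. Player 2 therefore wins $\neg\NZ\Diam W'$ in $\mathcal{G}|(W\cap U)$ simply by staying at $z$ forever.

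Your converse breaks exactly here: keeping the play out of $W'$ almost surely does not force it to leave $U$, because player 2 can stall in $U\setminus W'$. The correct target is the one in Lemma~\ref{lem:nz-safe}, taken from Stan et al. It consists of product states in $\Sem{\NZ\Box U}{1}$ at which all $\AS$-reachability targets have already been visited, inside the game restricted to the $\AS$-safety region.

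Three smaller repairs are also needed.
\begin{enumerate}
\item Reaching ``all $T_i$ visited and no $U_j$ violated'' does not enforce safety afterwards. Intersect that target with $\Sem{\AS\Box\bigcap_j U_j}{1}$, or first restrict the game as in Lemma~\ref{lem:as-safe}.
\item In the $\Diam T$ case, player 1 must switch to the $\AS\psi$ strategy as soon as chance leaves the rank-decreasing attractor path, as the paper does with its finite tree $\Pi$. Otherwise plays that never reach $T$ may loop inside $W$ without satisfying $\psi$.
\item Your on-the-fly PSPACE evaluation needs the observation that the memory bits change monotonically. This is what bounds the recursion depth polynomially on the exponential product.
\end{enumerate}
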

This result solves the open problem from \cite{kr21} of the precise complexity of rational verification for reachability/safety objectives.

By negating the classes and using determinacy, 
we conclude the following classes are
also determined, and have the same PSPACE-completeness
complexity to determine the winner:
\begin{enumerate}
\item Disjunctions of $\AS$ and $\NZ$ queries for both
reachability and safety objectives;
\item Positive Boolean combinations of $\NZ$ queries for
reachability and safety objectives.
\end{enumerate}

We prove Theorem~\ref{th:determinacy} using a sequence of lemmas,
building up to the main result.
\subsection{Conjunctions}
We start with the first class of conjunctive queries.
\begin{lemma}[Conjunction of nonzero]
\label{lem:conj-nz}
    Let $\phi = \bigwedge_{i\in I} NZ(X_i)$, where $X_i \subset S^\omega$ is any objective.
    $\phi$ is determined and player 1 has a winning strategy for $\phi$ if any only if
    player 1 has a winning strategy for each $\phi_i = NZ(X_i)$.    
\end{lemma}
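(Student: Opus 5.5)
The plan is to prove the characterization first and then derive determinacy from it. One direction is immediate. If a single strategy $\sigma$ wins $\phi$, then for every $\tau$ we have $\Pr^{\sigma,\tau}_{\mathcal{G}}(X_i)>0$ for each $i$. So the same $\sigma$ wins every $\phi_i$.

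For the converse, let $\sigma_i$ be a winning strategy for $\phi_i = \NZ(X_i)$, for each $i\in I$ (the index set is finite since queries are finite formulas). The idea is that nonzero probability survives mixing. We define the randomized strategy $\sigma$ that, at the first step of the play, picks an index $i\in I$ uniformly at random and then plays $\sigma_i$ forever.

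Because strategies are functions of the history $S^*S_1$, we must realize this choice without an explicit memory variable. We set
\[
\sigma(\pi s) \;=\; \sum_{i\in I} w_i(\pi s)\,\sigma_i(\pi s),
\]
where $w_i(\pi s)$ is the posterior probability that index $i$ was chosen, given the history $\pi s$. Concretely, $w_i(\pi s)$ is proportional to the product of the probabilities that $\sigma_i$ assigned to the player-1 moves taken along $\pi$. If the history has probability zero under all $\sigma_i$, the weights are set arbitrarily.

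The key technical step is to check that, for every $\tau$, the resulting measure is a convex combination:
\[
\mathcal{P}^{\sigma,\tau}_{\mathcal{G}} \;=\; \tfrac{1}{|I|}\sum_{i\in I} \mathcal{P}^{\sigma_i,\tau}_{\mathcal{G}}.
\]
We verify this on cylinder sets by induction on the length of the prefix. This is the usual argument that behavioral strategies realize mixtures (a Kuhn-type argument). It holds because player 2's strategy $\tau$ and the chance moves do not depend on the index. Given this identity, $\Pr^{\sigma,\tau}_{\mathcal{G}}(X_i) \ge \frac{1}{|I|}\Pr^{\sigma_i,\tau}_{\mathcal{G}}(X_i) > 0$ for each $i$. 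Hence $\sigma$ wins $\phi$. I expect the mixture identity to be the main obstacle, though it is routine; the inductive step is a direct computation with the posterior weights.

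For determinacy, suppose player 1 does not win $\phi$. By the characterization, some $\phi_j=\NZ(X_j)$ is not won by player 1. This is where we use determinacy of the single query $\NZ(X_j)$. For reachability and safety objectives this is Proposition (Determinacy)(1), and for general $X_j$ it follows from Martin's theorem. So player 2 has a strategy $\tau$ with $\Pr^{\sigma,\tau}_{\mathcal{G}}(X_j)=0$ for all $\sigma$. This $\tau$ falsifies the conjunct $\NZ(X_j)$ against every $\sigma$, and so it wins $\lnot\phi$. Equivalently, we can apply the Determinacy Argument with $\phi'=\NZ(X_j)$.
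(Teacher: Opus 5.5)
Your proof is correct and is essentially the paper's argument. For the nontrivial direction you mix the winning strategies $\sigma_i$ uniformly at the start. For the converse and for determinacy you use determinacy of a single unwon conjunct $\NZ(X_j)$ to get one player-2 strategy that spoils the whole conjunction. You also add two refinements the paper glosses over: you realize the initial coin flip as a behavioral strategy via Kuhn-style posterior weights, which is needed because strategies here are maps $S^*S_1\to\Delta(S)$. And you correctly write $\Pr^{\sigma,\tau}(X_i)\ge\frac{1}{|I|}\Pr^{\sigma_i,\tau}(X_i)$ where the paper writes an equality that does not hold in general.
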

\begin{proof}
Since each $\phi_i$ is determined, either player 1 has a winning strategy $\sigma_i$ for each $\phi_i$, or there is some $\phi_i$ for which player 2 has a winning strategy $\tau_i$.

If player 1 has a winning strategy $\sigma_i$ for each $i$, then the strategy $\sigma$
that randomizes uniformly between each $\sigma_i$ at the start of the game is a winning strategy for $\phi$, since for all strategies $\tau$ of player 2 we have that
$\Pr^{\sigma, \tau}(X_i) = \frac{1}{|I|} \Pr^{\sigma_i, \tau}(X_i) > 0$.

If player 2 has a winning strategy $\tau_i$ for some $\phi_i$, then this strategy is also a winning strategy for $\phi$, since for any strategy $\sigma$ of player 1 we have that $\sigma, \tau_i\not\models\phi_i$ and therefore $\sigma, \tau_i\not\models\phi$.

It follows that $\phi$ is determined, and player 1 has a winning strategy if and only if they have a winning strategy for each $\phi_i$.
\end{proof}
The following lemma shows that we can reduce a conjunction of multiple almost sure reachability queries to a single almost sure reachability query.
\begin{lemma}[Conjunction of $AS$]
\label{lem:conj-as-reach}
    Let $\phi = \bigwedge_{i=1}^n AS(\lozenge T_i)$ be a conjunction of almost sure reachability queries.
    $\phi$ is determined and player 1 has a winning strategy for $\phi$
    if and only if player 1 has a winning strategy for $\phi' = AS(\lozenge T')$ where
    \begin{align*}
        T' = \bigcup_{i=1}^n T_i \cap \Sem{\bigwedge_{i\neq j} AS(\lozenge T_i)}{1}
    \end{align*}
\end{lemma}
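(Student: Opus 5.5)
The proof goes by induction on $n$, using the Determinacy Argument of the Proposition with the single-objective query $\phi' = AS(\lozenge T')$. By part 1 of the Proposition, $\phi'$ is determined. For $n=1$ the inner conjunction is empty, so its winning region is all of $S$, $T' = T_1$, and there is nothing to prove. For the inductive step, the induction hypothesis says every conjunction $\bigwedge_{i\neq j} AS(\lozenge T_i)$ of $n-1$ almost sure reachability queries is determined. Write $W_j = \Sem{\bigwedge_{i\neq j} AS(\lozenge T_i)}{1}$ for its winning region. For every $s\notin W_j$, player 2 then has a winning strategy $\tau^j_s$ from $s$ for $\bigvee_{i\neq j} NZ(\Box (S\setminus T_i))$. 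It remains to check the two bullets of the Determinacy Argument.

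\emph{Player 1 wins $\phi'$ $\Rightarrow$ player 1 wins $\phi$.} Let $\sigma'$ win $\phi'$. Fix, for each $j$ and each $s\in T_j\cap W_j$, a winning strategy $\sigma^j_s$ for the remaining conjunction from $s$. Define $\sigma$ as follows: play $\sigma'$ until the first visit to $T'$. At that point the current state is some $s\in T_j\cap W_j$; choose one such $j$ canonically and continue with $\sigma^j_s$, ignoring the past. Against any $\tau$, $T'$ is reached almost surely. Conditioned on each finite prefix ending at its first visit to $T'$ at $s$, the continuation is a play of $\mathcal{G}_s$ under $\sigma^j_s$ and some residual strategy of player 2. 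That continuation reaches every $T_i$ with $i\neq j$ almost surely, and $T_j$ has already been visited. There are countably many such prefixes, so summing over them gives $\Pr(\lozenge T_i)=1$ for all $i$.

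\emph{Player 2 wins $\phi'$ $\Rightarrow$ player 2 wins $\neg\phi = \bigvee_i NZ(\Box(S\setminus T_i))$.} Let $\tau'$ satisfy $\Pr^{\sigma,\tau'}(\Box(S\setminus T'))>0$ for all $\sigma$. Define $\tau$ as follows: follow $\tau'$ until the first visit to $\bigcup_i T_i$. If that first state $s$ lies in $T'$, keep following $\tau'$. Otherwise $s\in T_j\setminus W_j$ for some $j$ (chosen canonically), and player 2 switches to $\tau^j_s$. Now fix $\sigma$. Up to the first target visit, $\tau$ agrees with $\tau'$, so the event $E$ of plays that avoid $T'$ has positive probability under $\sigma,\tau$. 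Split $E$ into two parts: plays that never reach $\bigcup_i T_i$, and the countably many cylinders determined by a prefix whose first target state is some $s\in T_j\setminus W_j$. If the first part has positive mass, every $T_i$ is avoided with positive probability and we are done. Otherwise some prefix $h$ ending in such an $s$ has positive probability. Conditioned on $h$, player 2 plays $\tau^j_s$ against the residual of $\sigma$, so some $T_i$ with $i\ne j$ is avoided with positive probability in the continuation. Since $h$ visits no target before $s$, and $s\notin T_i$ is not needed because $s$ is the first target hit only for $T_j$ (if $s$ also lies in $T_i$, the choice of $j$ lets us use $W_j$ for a different index, which we fix by choosing $j$ among all indices with $s\in T_j$), the whole play avoids $T_i$ with positive probability.

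The step I expect to be the main obstacle is this last one. Two things must be checked carefully. First, $s$ may belong to several $T_j$, and $s\notin T'$ means $s\notin W_j$ for \emph{every} such $j$; the switch must pick $j$ so that the index $i$ avoided by $\tau^j_s$ was genuinely not visited before, and since $s$ itself may lie in $T_i$, one must use the fact that $s\notin W_j$ for every $j$ with $s\in T_j$. Second, the conditioning argument must correctly combine the history-dependent switching with the measure, via countably many cylinders. Once both bullets hold, the Determinacy Argument yields determinacy of $\phi$ and the claimed equivalence.
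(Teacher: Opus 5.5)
Your proposal is correct and follows the paper's proof essentially step for step: induction on $n$, a case split on who wins the single query $AS(\lozenge T')$, player 1 switching to a winning strategy for the remaining conjunction on first reaching $T'$, and player 2 switching to a spoiling strategy on first hitting a target state outside $T'$. The index issue you flag as the main obstacle resolves itself: plays of $\mathcal{G}_s$ start at $s$, so any $T_i$ that $\tau^j_s$ avoids with positive probability cannot contain $s$. Separately, do not assert that $E$ itself has positive mass under $\sigma,\tau$, since that depends on play after the switch; assert it for the event of never hitting $\bigcup_i T_i$ together with the first-hit cylinders, whose probabilities coincide under $\tau$ and $\tau'$.
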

\begin{proof}
    The set $T'$ is a subset of the union of all target states $T = \bigcup_{i=1}^n T_i$ such that if $s\in T'\cap T_i$ is reached, player 1 can guarantee that all other target sets $T_j$ ($j\neq i$) are reached almost surely from each $s$.
    We show by induction over $n$ that the lemma holds for any conjunction with up to $n$ objectives. 
    For the base case, we have since $T' = T$ and $\phi' = \phi = AS(\lozenge T_1)$, so the lemma holds.
    Now assume the lemma holds for any conjunctions of up to $k$ objectives and consider the case of $n = k+1$.
    
    Suppose player 1 has a winning strategy $\sigma$ for $AS(\lozenge T')$. Consider the following strategy: player 1 starts playing $\sigma$ until a state $s\in T'$ is reached. Suppose $s\in T_i \cap \Sem{\bigwedge_{i\neq j} AS(\lozenge T_i)}{1}$ for some $i$ (pick any $i$ in case multiple apply).
    From $s$ player 1 switches to playing the strategy from $\Sem{\bigwedge_{i\neq j} AS(\lozenge T_i)}{1}$.
    
    By the induction hypothesis, each subquery $\bigwedge_{i\neq j} AS(\lozenge T_i)$ is determined. Consider the following strategy: player 2 plays a winning strategy for $\lozenge T'$. If the play reaches some $s\in T_i\setminus T'$ for some $i$, player 2 switches to a winning strategy for $\bigwedge_{i\neq j} AS(\lozenge T_i)$ from s.
    
    It follows that $\phi$ is determined and player 1 has a winning strategy if an only if they have a winning strategy for $AS(\lozenge T')$. Inductively, this holds for conjunctions over any number of objectives. The full proof is in the appendix.
\end{proof}

Next, we extend the results to conjunctions of both
$\AS$ and $\NZ$ reachability.
We start with the special case where there is exactly
one $\NZ$ reachability query.

\begin{lemma}[Conjunction of AS and one NZ reachability]
\label{lem:conj-as-one-nz-reach}
    Let $\phi = NZ(\lozenge T_0) \land \bigwedge_{i=1}^n AS(\lozenge T_i)$
    be a conjunction of one nonzero and multiple almost sure reachability  queries.
    $\phi$ is determined, and deciding if player 1 has a winning strategy is PSPACE-complete.
\end{lemma}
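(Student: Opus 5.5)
The plan is to reduce $\phi$ to a single nonzero reachability query on a product game that records which almost-sure targets have already been visited. For $J\subseteq\{1,\dots,n\}$ let $W_J=\Sem{\bigwedge_{i\notin J} AS(\lozenge T_i)}{1}$, the winning region of the residual almost-sure conjunction. By Lemma~\ref{lem:conj-as-reach}, each of these residual conjunctions is determined. I build the product game $\mathcal{H}$ with states $(s,J)$. The memory is updated as $J\mapsto J\cup\{i : s'\in T_i\}$ when the play moves to $s'$, and the initial memory already contains the indices of the targets containing $s_0$.

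Next I restrict $\mathcal{H}$ to $U=\{(s,J): s\in W_J\}$, with $s_\bot$ treated as losing. The key structural fact is that $U$ is closed under chance moves and player 2 moves: if some chance successor or player-2 successor of a winning state lay outside the residual winning region, then player 1 would lose with positive probability, or player 2 could escape. So only player 1 can leave $U$. The claim is then:
\begin{quote}
player 1 wins $\phi$ iff $(s_0,J_0)\in U$ and player 1 wins $NZ(\lozenge\{(s,J)\in U: s\in T_0\})$ in $\mathcal{H}|U$.
\end{quote}

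\textbf{If direction.} Nonzero reachability has a positional strategy $\sigma_{NZ}$ of attractor type. Against every player 2 strategy it reaches the target with positive probability within $N=|S|2^n$ steps. Player 1 flips a fair coin at the start.
\begin{itemize}
\item With probability $1/2$, player 1 plays $\sigma_{NZ}$ for $N$ steps, or until $T_0$ is hit. It then switches to a winning strategy for the residual conjunction from the current $(s,J)$. This is possible because $\sigma_{NZ}$ never leaves $U$, and player 2 and chance cannot leave $U$ either.
\item Otherwise, player 1 plays the almost-sure winning strategy from $(s_0,J_0)$ right away.
\end{itemize}
Both branches satisfy every $AS(\lozenge T_i)$. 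The first branch gives positive probability of reaching $T_0$.

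\textbf{Converse (determinacy).} Suppose the condition fails.
\begin{itemize}
\item If $(s_0,J_0)\notin U$, player 2 wins the residual almost-sure conjunction by Lemma~\ref{lem:conj-as-reach}, and that strategy already falsifies $\phi$.
\item Otherwise, by single-objective determinacy, player 2 has a (positional on $\mathcal{H}$) strategy ensuring $AS(\square\neg T_0)$ while the play stays in $\mathcal{H}|U$. Player 2 plays it. When player 1 moves to some $(s,J)\notin U$, player 2 switches to its winning strategy for the negation of the residual conjunction from $(s,J)$.
\end{itemize}
Against any $\sigma$ there are two cases. If the play exits $U$ with positive probability, some $AS(\lozenge T_i)$ fails; note that the $J$-bookkeeping ensures targets already visited are irrelevant. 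Otherwise $T_0$ is reached with probability zero. Either way $\neg\phi$ holds, so $\phi$ is determined by the Determinacy Argument.

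\textbf{Upper bound.} The component $J$ only grows, so I solve $\mathcal{H}$ layer by layer in decreasing $J$, recursing on $|J|$ with depth at most $n$. Membership $s\in W_J$ is decided through the single $AS(\lozenge T')$ query of Lemma~\ref{lem:conj-as-reach}. That query is a polynomial-time qualitative game, provided one has an oracle for $T'$, and $T'$ is itself computed recursively on larger $J$. The nonzero reachability in $\mathcal{H}|U$ is a backward attractor computation, whose layers are likewise computed on demand per $J$. Each recursion level uses polynomial space, giving PSPACE.

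\textbf{Lower bound.} PSPACE-hardness already holds with $T_0=S$, i.e.\ for conjunctions of almost-sure reachability. It is inherited from generalized reachability in nonstochastic games \cite{fijalkow2010surprizing}, where almost-sure winning coincides with sure winning.

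\textbf{Main obstacle.} I expect the hardest step to be the careful proof that $U$ cannot be left by chance or player 2, and the resulting correctness of player 2's switching strategy. This includes handling the first step and the memory update consistently. The second delicate point is organising the PSPACE recursion so that the exponential product $\mathcal{H}$ is never stored explicitly.
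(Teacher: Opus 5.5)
Your proposal is correct and follows essentially the paper's route: a product tracking visited almost-sure targets, restriction to the almost-sure winning region, reduction to a single nonzero-reachability game (equivalently, nonstochastic reachability with chance given to player~1), attractor-then-switch for player~1, and spoil-then-switch for player~2; the only differences are organisational, namely your on-demand recursion over $J$ for PSPACE instead of the paper's depth-bounded search for a winning tree, and the source of hardness. One detail to fix: chance can lead the play outside the attractor within the first $N$ steps, and there an arbitrary positional $\sigma_{NZ}$ may move to $s_\bot$, so either choose $\sigma_{NZ}$ to pick a successor in $U$ at such player-1 states (one always exists) or switch to the almost-sure strategy as soon as the attractor is left.
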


\begin{proof}
    We first construct a nonstochastic reachability game $\langle \mathcal{G}^*, F \rangle$ and show that for each player, winning the query $\phi$ in $\mathcal{G}$ is equivalent to winning the reachability game $\langle \mathcal{G}^*, F \rangle$.
    Since nonstochastic reachability games are determined, it follows that $\phi$ is determined.

    We construct the goal unfolding of $\mathcal{G}$ (see also \cite{DisjunctionSG}, \cite{fijalkow2010surprizing}, \cite{OnStochasticGames}). This is a game constructed from $\mathcal{G}$ with state space $S\times \{0, 1\}^{n+1}$, such that in a state $(s, b)$, the vector $b$ tracks which target sets $T_i$, $i\in \{0, \dots,  n\}$ have already been visited during a play. 
    For each state $s\in S$, let $I_s = \{i\in \{0, \dots n\} | s\in T_i\}$ be the index set of targets that include $s$. We define the goal unfolding formally as $\mathcal{G}' = \langle S', s_0', A', P' \rangle$, where:
    \begin{flalign*}
        S' =\ & S\times \{0, 1\}^{n+1} & \\
        s_0' =\ & (s_0, (0, \dots, 0)) & \\
        A'((s, b)) =\ &\{(s', b') |\ s'\in A(s), &\\
        &\ \forall i\in I_s: b'_i = 1, \forall i\not\in I_s: b'_i = b_i\} & \\
        P'((s, b))((s', b')) =\ & \begin{cases}
            P(s)(s') &\textrm{ if } \forall i\in I_s: b'_i = 1, \forall i\not\in I_s: b'_i = b_i \\
            0 &\textrm{ otherwise}
        \end{cases} &
    \end{flalign*}
    Note that we can identify paths in $\mathcal{G'}$ with paths in $\mathcal{G}$ via projection or by augmenting each state with the vector of previously visited target sets. Similarly we can identify strategies in $\mathcal{G'}$ with strategies in $\mathcal{G}$.

    Consider in $\mathcal{G}'$ the query $\phi' = NZ(\lozenge T_0')\land \bigwedge_{i=1}^n AS(\lozenge T_i')$, where $\forall i\in \{0, \dots, n\}: T_i' = \{(s, b)|s\in S, b_i = 1\}$. This query is equivalent to the query $\phi$ in $\mathcal{G}$, since for any pair of strategies $\sigma, \tau$  it holds that
    $\sigma, \tau \models_{\mathcal{G}'} \phi' $ if and only if
    $\sigma, \tau \models_{\mathcal{G}} \phi $.
    Since all target sets $T_i'$ are absorbing, i.e. any play that reaches $T_i'$ will stay in $T_i'$, it follows that $\sigma, \tau\models \bigwedge_{i=1}^n AS(\lozenge T_i')$ if and only if $\sigma, \tau \models AS(\lozenge T')$ where $T' = \{(s, b)|s\in S, \forall i\neq 0: b_i = 1\}$.

    Let $M = \Sem{AS(\lozenge T')}{1}$ be the winning region of player 1 for $AS(\lozenge T')$ in $\mathcal{G}'$, and let $\mathcal{G}'|\Sem{AS(\lozenge T')}{1}$ be the game restricted to that set.
    Finally, let $\langle \mathcal{G}^*, \lozenge T_0' \rangle$ be the nonstochastic reachability game obtained from the restricted game $\mathcal{G}'|\Sem{AS(\lozenge T')}{1}$ by giving control of the stochastic states to player 1.

    We now show that player 1 has a winning strategy for $\phi'$ in $\mathcal{G'}$ if and only if they win $\langle \mathcal{G}^*, \lozenge T_0' \rangle$. Since $\langle \mathcal{G}^*, \lozenge T_0' \rangle$ is a nonstochastic reachability game it is determined and winning strategies are deterministic and memoryless (in $\mathcal{G}^*$).

    Assume player 2 has a winning strategy $\tau^*$ for $\langle \mathcal{G}^*, \lozenge T_0' \rangle$, and let $\tau$ be the deterministic strategy for player 2 in $\mathcal{G}'$ that plays according to $\tau^*$ until some state $s\not\in M$ is reached, and then switches to a winning strategy $\tau_s$ for player 2 for $AS(\lozenge T')$.    
    Let $\sigma$ be any strategy for player 1 in $\mathcal{G'}$ and assume that $\sigma, \tau \models_{\mathcal{G}'} \phi'$.
    It follows that there is a play $\pi$ with $\Pr^{\sigma, \tau}(\pi) > 0$, $\pi\in \bigcap_{i=0}^n \lozenge T_i'$. If $\pi$ would reach a state $s\not\in M$, then player 2 would play $\tau_s$ and ensure $\sigma, \tau \not\models_{\mathcal{G}'} \bigwedge_{i=1}^n AS(\lozenge T_i')$. Therefore $\pi$ is a play in $M$.    
    Let $\pi^*$ be the corresponding play in $\mathcal{G^*}$ and let $\sigma^*$ be the strategy that on each prefix of $\pi^*$ plays the next state on $\pi^*$ (player 1 plays at all states $(s, b)$ with $s\in S_1'\cup S_c'$). 
    Then it follows that $\pi^* \in \lozenge T_0'$. Since playing $\sigma^*$ against $\tau^*$ in $\mathcal{G^*}$ results in the path $\pi^*$, player 1 wins $\langle \mathcal{G}^*, T_0'\rangle$, which is a contradiction to $\tau^*$ being a winning strategy for $\langle \mathcal{G}^*, \lozenge T_0' \rangle$ for player 2. Therefore it follows that $\sigma, \tau \not\models_{\mathcal{G}} \phi'$, and $\tau$ is a winning strategy for player 2 for $\phi'$ in $\mathcal{G}$.

    Assume now instead that player 1 has a winning strategy $\sigma^*$ for $\langle \mathcal{G}^*, \lozenge T_0' \rangle$ and let $\tau^*$ be any strategy of player 2. Without loss of generality, both $\sigma^*$ and $\tau^*$ are memoryless strategies.    
    Then the play according to $\sigma^*, \tau^*$ is a finite simple path $\pi^*$ that ends in $s\in T_0'$ and therefore does not contain $s_\bot$. The union of all these simple paths form a tree $\Pi$ that branches universally on states of player 2.
    Let $\sigma$ be the strategy for player 1 in $\mathcal{G}$ that plays according to $\sigma^*$ on $\Pi$
    and switches to playing $\sigma_s$ after leaving $\Pi$ at $s$ or after reaching a leaf $s$ of $\Pi$, where $\sigma_s$ is the winning strategy for $\bigwedge_{i=1}^n AS(\lozenge T_i')$ from $s$. Note that a play according to $\sigma$ only leaves $\Pi$ at stochastic states $s\in S_c' \cap M$.
    Let $\tau$ be any strategy of player 2. Because $\Pi$ branches universally at states of player 2 and any play reaching a stochastic state $s \in \Pi$ has a nonzero chance to stay in $\Pi$, there is a root-to-leaf path in $\Pi$ that has positive probability according to $\sigma, \tau$. It follows that $\sigma, \tau \models_{\mathcal{G}'} NZ(\lozenge T_0')$.
    Since any play with positive probability eventually either reaches a leaf of $\Pi \subset M$ or
    leaves $\Pi$ at some stochastic state $s\in S_c'\cap M$, player 1 eventually plays a strategy $\sigma_s$ that is a winning strategy for $\bigwedge_{i=1}^n AS(\lozenge T_i')$. It follows that
    $\sigma, \tau \models_{\mathcal{G}'} \bigwedge_{i=1}^n AS(\lozenge T_i')$. Therefore $\sigma, \tau \models_{\mathcal{G}'} \phi'$ and $\sigma$ is a winning strategy for $\phi'$ in $\mathcal{G}'$.

    Therefore $\phi'$ in $\mathcal{G}'$ is determined, and player 1 has a winning strategy if and only if they win the reachability game $\langle \mathcal{G}^*, \lozenge T_0' \rangle$.

    For complexity, note that the reachability game $\langle \mathcal{G}^*, \lozenge T_0' \rangle$ is constructed from the goal unfolding of $\mathcal{G}'$ and therefore the state space is exponential in terms of the number of almost sure target sets $n$.
    However, because along any path $\pi$, the value of $b_i$ can only change from $0$ to $1$ once, the length of any simple path in $\mathcal{G}^*$ is at most of length $(n+2)|S|$. Therefore the depth of the tree $\Pi$ spanned by any winning strategy in $\mathcal{G}^*$ is only linear in the initial state space $S$, and we can use a polynomial space algorithm to verify if a winning strategy exists by checking if there is a tree in $\mathcal{G}^*$ that branches universally at states of player 2, where each leaf is in $T_0'$. 
    The algorithm can be found in the appendix.
    PSPACE-hardness follows from the multiple almost sure reachability problem in MDPs \cite{PercentileQueriesMDP}    
\end{proof}
    
We now show the general case.

\begin{lemma}[Conjunction of AS and NZ reachability]
\label{lem:conj-as-nz-reach}
    Let $\phi = \bigwedge_{i\in I} AS(\lozenge T_i) \land \bigwedge_{j\in J} NZ(\lozenge T_i) $
    be a conjunction of almost sure and nonzero reachability  queries. 
    $\phi$ is determined, and deciding if player 1 has a winning strategy is PSPACE-complete.
\end{lemma}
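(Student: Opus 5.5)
We reduce to Lemma~\ref{lem:conj-as-one-nz-reach} in the same way that Lemma~\ref{lem:conj-nz} handles pure nonzero conjunctions. Write $\psi = \bigwedge_{i\in I} AS(\lozenge T_i)$ and, for each $j\in J$, $\phi_j = NZ(\lozenge T_j)\land \psi$. By Lemma~\ref{lem:conj-as-one-nz-reach}, each $\phi_j$ is determined and decidable in PSPACE. The claim is that player 1 wins $\phi$ if and only if player 1 wins every $\phi_j$. (If $J=\emptyset$, Lemma~\ref{lem:conj-as-reach} already gives the result, with PSPACE membership via the goal unfolding as below.)

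\emph{Player 2 side.} Suppose player 2 wins some $\phi_j$ with strategy $\tau_j$. Then for every $\sigma$ we have $\sigma,\tau_j\not\models \phi_j$. Since $\phi$ implies $\phi_j$, it follows that $\sigma,\tau_j\not\models\phi$, so $\tau_j$ wins $\lnot\phi$.

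\emph{Player 1 side.} Suppose player 1 has winning strategies $\sigma_j$ for each $\phi_j$. Let $\sigma$ be the strategy that at the start picks $j\in J$ uniformly at random and then plays $\sigma_j$. Fix any $\tau$. Then $\Pr^{\sigma,\tau}(\lozenge T_j)\ge \frac{1}{|J|}\Pr^{\sigma_j,\tau}(\lozenge T_j)>0$ for every $j$. For each $i\in I$,
\[
\Pr^{\sigma,\tau}(\lozenge T_i)=\frac{1}{|J|}\sum_{j}\Pr^{\sigma_j,\tau}(\lozenge T_i)=1,
\]
since each $\sigma_j$ satisfies $\psi$ almost surely against every $\tau$. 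The one point to check is that $\tau$ cannot observe the hidden random choice, and this does not matter. Conditioned on the choice $j$, the play is distributed as under $(\sigma_j,\tau)$. We must make sure $\tau$, being history dependent, behaves identically: a pure history-based $\tau$ responds to the history alone, so conditioning on $j$ yields exactly $\mathcal{P}^{\sigma_j,\tau}$. To make this rigorous, the randomization can be placed in player 1's first move as a mixture of strategies, and the linearity of the induced measure in mixtures of $\sigma$ justifies the identity. Hence $\sigma$ wins $\phi$. Combining the two sides, for every game one of the players wins $\phi$, so $\phi$ is determined.

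\emph{Complexity.} Membership in PSPACE follows by running the PSPACE procedure of Lemma~\ref{lem:conj-as-one-nz-reach} for each of the $|J|$ queries $\phi_j$ and accepting iff all succeed; polynomially many PSPACE calls reusing space stay in PSPACE. Hardness is inherited from the special case with a single NZ query (or from multiple almost sure reachability in MDPs \cite{PercentileQueriesMDP}).

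The main obstacle is the mixing step: the argument must confirm that randomizing over the $\sigma_j$ preserves each almost-sure guarantee against an adaptive opponent. As shown above, this holds because the almost-sure property holds for each component against every $\tau$, and a convex combination of probability-one events still has probability one.
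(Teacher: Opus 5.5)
Your proposal is correct and follows the paper's proof essentially step for step. You reduce to the queries $\phi_j = NZ(\lozenge T_j)\land\bigwedge_{i\in I} AS(\lozenge T_i)$, win $\phi$ by uniformly mixing the winning strategies $\sigma_j$, defeat it with any $\tau_j$ winning $\lnot\phi_j$, and get determinacy and PSPACE-completeness from Lemma~\ref{lem:conj-as-one-nz-reach}, as the paper does; your remarks on $J=\emptyset$ and on realizing the mixture as a history-based strategy are details the paper leaves implicit.
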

\begin{proof}
We show that $\phi$ is determined and that player 1 has a winning strategy for $\phi$ if and only if they have a winning strategy for each $\phi_j = NZ(\lozenge T_j) \land \bigwedge_{i\in I} AS(\lozenge T_i)$, $j\in J$. The rest follows from Lemma~\ref{lem:conj-as-one-nz-reach}.

Assume that for each $j$, player 1 has a winning strategy $\sigma_j$for $\phi_j$. Then the strategy $\sigma$ that plays each $\sigma_j$ with probability $\frac{1}{|J|}$ is a winning strategy for $\phi$:
Let $\tau$ be any strategy of player 2, then $\sigma_j, \tau \models \bigwedge_{i\in I} AS(\lozenge T_i)$ for each $j\in J$, therefore $\sigma, \tau \models \bigwedge_{i\in I} AS(\lozenge T_i)$, and $\sigma_j, \tau \models  NZ(\lozenge T_j)$ therefore $\sigma, \tau \models  NZ(\lozenge T_j)$ for all $j\in J$.

Assume now instead that for some $j$, player 2 has a winning strategy $\tau_j$ for $\phi_j$. For any strategy $\sigma$ of player 1, it follows that $\sigma, \tau_j \not\models \phi_j$ and therefore $\sigma, \tau_j\not\models \phi$. Therefore $\tau_j$ is a winning strategy for player 2 for $\phi$. 

It follows that $\phi$ is determined and player 1 has a winning strategy if and only if they have a winning strategy for each $\phi_j$.
\end{proof}
    Note that the approach from Lemma~\ref{lem:conj-as-nz-reach} that generalizes from a single nonzero reachability objective to conjunctions involving multiple nonzero reachability objectives also works for nonzero safety objectives.

Finally, we add safety conditions.
For almost sure safety, first note that for any pair of strategies $\sigma, \tau$, it holds that 
$\sigma, \tau \models \bigwedge AS(\square T_i)$ iff $\sigma, \tau \models AS(\bigcap \square T_i)$
and $\bigcap \square T_i = \square \bigcap T_i$. Therefore conjunctions of multiple almost sure safety queries can be replaced with a single almost sure safety query.
\begin{lemma}[Almost sure safety]
    \label{lem:as-safe}
    Let $\phi = \psi \land AS(\square T)$, where $\psi$ is positive Boolean formula over almost sure and nonzero reachability queries. Let $\mathcal{G'} = \mathcal{G}| \Sem{AS(\square T)}{1}$
    be the game restricted to the winning region of $AS(\square T)$.
    Then $\phi$ is determined, and player 1 has a winning strategy if and only if player 1 has a winning strategy for $\psi$ in $\mathcal{G'}$.
\end{lemma}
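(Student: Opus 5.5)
The plan is to use the structure of the winning region $W = \Sem{AS(\square T)}{1}$ and to transfer strategies between $\mathcal{G}$ and $\mathcal{G}' = \mathcal{G}|W$ in both directions. I will use the determinacy of $\psi$ in $\mathcal{G}'$, which holds in the settings where the lemma is applied (for instance conjunctions, via Lemma~\ref{lem:conj-as-nz-reach}); this assumption should be stated explicitly.

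First I record the standard closure properties of $W$, which follow from determinacy of single-objective queries:
\begin{itemize}
\item[-] $W\subseteq T$.
\item[-] Every player 2 state and every chance state in $W$ has all its successors in $W$. A chance state with a positive-probability successor outside $W$ would let player 2 violate $AS(\square T)$ with positive probability.
\item[-] Every player 1 state in $W$ has at least one successor in $W$.
\item[-] From every $s\notin W$, player 2 has a winning strategy $\tau_s$ for $\NZ(\lozenge (S\setminus T))$.
\end{itemize}
Consequently, a play that starts in $W$ can leave $W$ only through a move of player 1. In $\mathcal{G}'$ this means $s_\bot$ is reachable only by player 1's choice. If $s_0\notin W$, then $s_0'=s_\bot$, player 1 wins neither game, and $\tau_{s_0}$ wins $\phi$ for player 2. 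So assume $s_0\in W$.

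\emph{Player 1 direction.} Let $\sigma'$ be winning for $\psi$ in $\mathcal{G}'$. I first normalize $\sigma'$ so that it never moves to $s_\bot$: whenever $\sigma'$ puts mass on $s_\bot$, redirect that mass to some fixed successor in $W$ and from there continue with a fixed memoryless winning strategy for $AS(\square T)$, or better, continue with $\sigma'$ as if nothing changed. This step needs care. $s_\bot$ is a non-target sink, so mass sent there never helps a reachability query. Hence any AS conjunct already forces zero mass on $s_\bot$, and for NZ conjuncts, removing that mass cannot destroy positivity of the paths that already reach targets. I will argue this monotonicity for positive Boolean $\psi$ atom by atom. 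The normalized strategy, viewed in $\mathcal{G}$, keeps every play inside $W\subseteq T$ surely, so $AS(\square T)$ holds. Against any $\tau$, the induced measure coincides with the one in $\mathcal{G}'$ for the corresponding strategy. Since targets in $\mathcal{G}'$ are $T_i\cap W$ and plays stay in $W$, $\psi$ holds in $\mathcal{G}$.

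\emph{Player 2 direction.} Let $\tau'$ be winning for $\neg\psi$ in $\mathcal{G}'$. In $\mathcal{G}$, player 2 plays $\tau'$ while the play remains in $W$ and switches to $\tau_s$ upon first reaching some $s\notin W$. Fix any $\sigma$. If the play leaves $W$ with positive probability, then $\tau_s$ gives positive probability of leaving $T$, so $AS(\square T)$ fails. Otherwise the play stays in $W$ almost surely. Then $\sigma$ corresponds to a strategy in $\mathcal{G}'$ that reaches $s_\bot$ with probability zero, the two measures on target-reaching events agree, and $\neg\psi$ transfers. In both cases $\phi$ fails, so player 2 wins $\phi$.

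The main obstacle is the bookkeeping around $s_\bot$: matching probabilities of $\lozenge(T_i\cap W)$ in $\mathcal{G}'$ with those of $\lozenge T_i$ in $\mathcal{G}$, and justifying the normalization of $\sigma'$ away from $s_\bot$. I would handle both by an explicit coupling of plays that agree up to the first exit from $W$.
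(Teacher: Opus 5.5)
Your proposal is correct and follows essentially the paper's route. Player 1 redirects the mass a winning $\mathcal{G'}$-strategy puts on $s_\bot$ back into $\Sem{AS(\square T)}{1}$, and player 2 plays its $\mathcal{G'}$-strategy until the play exits $\Sem{AS(\square T)}{1}$ and then switches to a spoiling strategy for $AS(\square T)$. Your explicit assumption that $\psi$ is determined in $\mathcal{G'}$ is exactly what the paper's two-case argument uses silently. It is genuinely needed: taking $T = S$, the paper's own nondetermined example shows that the determinacy claim fails for arbitrary positive $\psi$.

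One slip to fix: a satisfied atom $AS(\lozenge T_i)$ does not force zero mass on $s_\bot$. It only forces that any such mass is sent almost surely after the target has already been visited. That weaker fact is still all your atom-by-atom monotonicity argument needs.
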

We give a proof in the appendix. The proof only shows that Lemma~\ref{lem:as-safe} holds for conjunctions with reachability targets. However, since almost sure safety in stochastic games is equivalent to safety in the non-stochastic game where player 2 is given control of the stochastic states, Lemma~\ref{lem:as-safe} actually holds for all $\psi$.

For nonzero safety, we use a result from \cite{StochasticLossyChannel}, which shows that in conjunctions, we can replace a nonzero safety objective with a nonzero reachability objective. Note that in general this result only applies to nonzero safety objectives where a play that leaves the target set $T$ can not enter $T$ again later. The construction of the goal unfolding ensures that this holds.
\begin{lemma}[Conjunction of AS and one NZ safety] 
    \label{lem:nz-safe}   
    Let $\phi = NZ(\square T_0) \land \bigwedge_{i=1}^n AS(\lozenge T_i)$
    be a conjunction of one nonzero safety query and multiple almost sure reachability queries.
    $\phi$ is determined, and deciding if player 1 has a winning strategy is PSPACE-complete.
\end{lemma}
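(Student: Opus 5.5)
The plan is to reduce $\phi = NZ(\square T_0)\land\bigwedge_{i=1}^n AS(\lozenge T_i)$ to a query of the form handled by Lemma~\ref{lem:conj-as-one-nz-reach}, by replacing the nonzero safety objective with a nonzero reachability objective.

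\textbf{Step 1: goal unfolding.} Build the goal unfolding $\mathcal{G}'$ on $S\times\{0,1\}^{n+1}$ as in the proof of Lemma~\ref{lem:conj-as-one-nz-reach}. Here bits $1,\dots,n$ record visits to $T_1,\dots,T_n$, and bit $0$ records that the play has ever left $T_0$. Let $U_0=\{(s,b)\mid b_0=0\}$. Then $\square T_0$ in $\mathcal{G}$ corresponds to $\square U_0$ in $\mathcal{G}'$. Crucially, once a play leaves $U_0$ it never re-enters it. As in that proof, the conjunction of almost sure queries collapses to $AS(\lozenge T')$ with $T'=\{(s,b)\mid \forall i\neq 0: b_i=1\}$, and $T'$ is absorbing.

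\textbf{Step 2: replace $NZ(\square U_0)$ by a reachability query (the result of \cite{StochasticLossyChannel}).} Let $M=\Sem{AS(\lozenge T')}{1}$. Let $W$ be the winning region of player 1 for $AS(\square U_0)\land AS(\lozenge T')$. By Lemma~\ref{lem:as-safe}, $W$ equals the $AS(\lozenge T')$-winning region of $\mathcal{G}'|\Sem{AS(\square U_0)}{1}$. The claim is that player 1 wins $\phi$ if and only if player 1 wins $\phi''=NZ(\lozenge W)\land AS(\lozenge T')$.

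For the ``if'' direction, player 1 plays the $\phi''$ strategy until $W$ is reached, then switches to the $W$ strategy. With positive probability the play reaches $W$ while still in $U_0$, because $W\subseteq U_0$. From there it almost surely stays in $U_0$, so $NZ(\square U_0)$ holds. Almost sure reachability of $T'$ is maintained because every reached state lies in $M$.

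For the ``only if'' direction, suppose $\sigma,\tau$ satisfy $\phi$. The set of plays that stay in $U_0$ and reach $T'$ then has positive probability. Its elements enter $T'\cap U_0$, which is absorbing for bits $1..n$. So the real content is $NZ$ of eventually staying in $U_0$ for good. By a L\'evy 0--1 argument, conditioned on such plays, the probability of staying in $U_0$ forever tends to $1$. Hence the play must hit states from which player 1 cannot be spoiled against $AS(\square U_0)$; otherwise player 2's spoiling strategy forces a fixed positive exit probability at each such visit.

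This step is the main obstacle, because the argument is a limit argument rather than a combinatorial one. I would set it up via determinacy: player 2 plays a winning strategy for $\phi''$, and switches to a winning strategy for $AS(\square U_0)\land AS(\lozenge T')$ whenever a state outside $W$ in $U_0$ is visited. Since $U_0$ is finite, player 2 then achieves a uniform positive exit probability, and $NZ(\square U_0)$ fails. I would invoke \cite{StochasticLossyChannel} for the precise argument.

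\textbf{Step 3: conclude.} The query $\phi''$ is exactly of the form of Lemma~\ref{lem:conj-as-one-nz-reach}, applied in $\mathcal{G}'$ (whose bits already absorb). Determinacy of $\phi$ follows by the Determinacy Argument, since both players' wins transfer between $\phi$ and $\phi''$.

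For the upper bound, note that computing $W$ is polynomial per state of $\mathcal{G}'$. The reason is that almost sure reachability and almost sure safety in a restricted game are polynomial. The winning regions can therefore be queried on the fly, so the tree search from Lemma~\ref{lem:conj-as-one-nz-reach} runs in polynomial space, giving PSPACE membership. PSPACE-hardness is inherited from conjunctions of almost sure reachability queries \cite{PercentileQueriesMDP}.
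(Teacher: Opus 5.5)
\textbf{The gap is in Step 2: the target set $W$ is wrong.} With $W=\Sem{AS(\square U_0)\land AS(\lozenge T')}{1}$, the ``only if'' direction is false.

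Take $n=1$ and $T_1=S$. Then the almost-sure part is trivial and $W$ is essentially $\Sem{AS(\square U_0)}{1}$. Let the initial state be a player-2 state $p$ that may loop on itself or move to a chance state $c$. From $c$, the play goes with probability $\frac{1}{2}$ each to terminals $a$ and $x$. Let $T_0=\{p,c,a\}$.
\begin{itemize}
\item Whatever player 2 does, $\Pr(\square T_0)\ge\frac{1}{2}$, so player 1 wins $\phi$.
\item Player 2 can force $c$ and then $x$ with positive probability, so neither $p$ nor $c$ lies in the almost-sure safety region. Hence $W$ consists only of the copies of $a$.
\item By looping at $p$ forever, player 2 avoids $W$, so player 1 loses $NZ(\lozenge W)\land AS(\lozenge T')$.
\end{itemize}

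Your L\'evy 0--1 argument only shows that, for the fixed pair $(\sigma,\tau)$, the conditional exit probability tends to $0$ along safe plays. It puts the play into the almost-sure safety region only if player 2 actually plays the attractor-spoiling strategy. In the example, avoiding $W$ and spoiling safety are conflicting aims for player 2.

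The player-2 strategy you sketch does not fix this. It would switch at the very first state, since every state before $W$ lies outside $W$. Moreover, a spoiling strategy for $AS(\square U_0)\land AS(\lozenge T')$ only pushes one of two probabilities below $1$, which does not refute $NZ(\square U_0)\land AS(\lozenge T')$.

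\textbf{The missing idea.} Because $T'$ is absorbing and must be reached almost surely, safety only has to be handled \emph{after} $T'$ is reached. At that point it is a single-objective nonzero query. So the right target, which the paper takes from \cite{StochasticLossyChannel}, uses the nonzero rather than the almost-sure safety region:
\[
T_0''=\Sem{NZ(\square U_0)}{1}\cap T'.
\]

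Your ``if'' argument carries over. Use $\Sem{NZ(\square U_0)}{1}\subseteq U_0$ and the fact that $U_0$ cannot be re-entered; the almost-sure part is automatic because $T_0''\subseteq T'$.

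For ``only if'', no limit argument is needed. Suppose $\tau''$ wins $\neg(NZ(\lozenge T_0'')\land AS(\lozenge T'))$. Let player 2 follow $\tau''$ until $T'$ is first reached at some $t$, then switch to a strategy that gives $\square U_0$ probability $0$ from $t$. Against any $\sigma$, one of two cases holds:
\begin{itemize}
\item $\Pr(\lozenge T')<1$. This is unaffected by the switch, since the two strategies agree before $T'$.
\item $T'$ is reached almost surely. Then the first state of $T'$ is almost surely outside $T_0''$, hence outside $\Sem{NZ(\square U_0)}{1}$, so the switch yields $\Pr(\square U_0)=0$.
\end{itemize}
Together with Lemma~\ref{lem:conj-as-one-nz-reach}, this gives determinacy directly.

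For the PSPACE bound, membership of $(s,b)$ in $T_0''$ is a nonzero-safety question on the polynomial-size part of $\mathcal{G}'$ where $b_1=\dots=b_n=1$. It is therefore decidable in polynomial time.
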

\begin{proof}
    Similar to the proof of Lemma~\ref{lem:conj-as-one-nz-reach}, we can construct the goal unfolding $\mathcal{G}'$, and consider the equivalent query
    \begin{align*}
        \phi' = NZ(\square T_0') \land \bigwedge_{i=1}^n AS(\lozenge T_i')
    \end{align*}
    with $T_i' = \{(s, b)|s\in S, b_i = 1\}$ for $i\in \{1, \dots, n\}$ as in Lemma~\ref{lem:conj-as-one-nz-reach} and
    $T_0' = \{(s, b) | s\in S, b_0 = 0 \}$.
    Here we can apply a result from \cite{StochasticLossyChannel} that states that player 1 wins $\phi'$ if and only if they win $\phi'' = NZ(\lozenge T_0'') \bigwedge_{i=1}^n AS(\lozenge T_i')$
    where
    \begin{align*}
        T_0'' = [\![NZ(\square T_0')]\!]_1 \cap \bigcap_{i=1}^n T_i'
    \end{align*}
    A winning strategy $\sigma$ for $\phi''$ reaches a state
    $(s, b) \in [\![NZ(\square T_0')]\!]_1\cap \bigcap_{i=1}^n T_i$ with positive probability. Since $b_0 = 0$, it follows that for each state $(s', b')$ in the play so far $b'_0 = 0$ and therefore the current play is in $\square T_0'$. Since the almost sure reachability queries are already satisfied, player 1 can then switch to playing the winning strategy from
    $\Sem{NZ(\square T_0')}{1}$ to satisfy $NZ(\square T_0')$.

    The rest of the lemma follows from Lemma~\ref{lem:conj-as-one-nz-reach}, since now $\phi''$ is a conjunction of one nonzero reachability and multiple almost sure reachability objectives.
\end{proof}
Finally, the proof for Theorem~\ref{th:determinacy} for a conjunction $\phi$ of almost sure and nonzero reachability and safety queries is as follows:
\begin{enumerate}
    \item For each nonzero query $NZ(X_i)$ in $\phi$, consider the conjunction of $NZ(X_i)$ and all almost sure queries in the goal unfolding.
    \item Use Lemma~\ref{lem:as-safe} to remove almost sure safety queries by moving to a restricted game.
    \item If $X_i = \square T_i$, use Lemma~\ref{lem:nz-safe} to replace any nonzero safety queries with nonzero reachability queries.    
    \item Use Lemma~\ref{lem:conj-as-one-nz-reach} to check if player 1 has a winning strategy for the resulting conjunction of one nonzero and multiple almost sure reachability queries.
    \item With Lemma~\ref{lem:conj-as-reach}, player 1 has a winning strategy for $\phi$ if and only if they have a winning strategy for each of these conjunctions.
\end{enumerate}

\subsection{Positive Boolean combinations}
We now move on to positive Boolean combinations.
First note that we can negate Lemma~\ref{lem:conj-nz} to get
the following corollary for 
disjunctions over multiple almost sure queries.

\begin{corollary}[Disjunction Almost-sure]
\label{cor:disj-as}
    Let $\phi$ be a disjunction of almost sure queries. Then
    $\phi$ is determined and player 1 has a winning strategy for $\phi=\bigvee_{i=1}^n \AS(X_i)$ if and only if they have a winning strategy for some $\phi_i = AS(X_i)$.
\end{corollary}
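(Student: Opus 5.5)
The plan is to derive Corollary~\ref{cor:disj-as} from Lemma~\ref{lem:conj-nz} by duality. Each $X_i$ is a reachability or safety objective, so its complement $\bar X_i = S^\omega\setminus X_i$ is again a safety or reachability objective. For every pair of strategies we have $\sigma,\tau\models \AS(X_i)$ iff $\sigma,\tau\not\models \NZ(\bar X_i)$. Consequently
\[
\sigma,\tau\models \neg\phi \iff \sigma,\tau\models \bigwedge_{i=1}^n \NZ(\bar X_i) =: \psi .
\]
Player 2 wins $\phi$ in the sense of the definition (a single $\tau$ refuting $\phi$ against every $\sigma$) exactly when player 2, acting as the maximizer, has a winning strategy for $\psi$. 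Player 1 wins $\phi$ exactly when player 1 has a strategy that refutes $\psi$ against every $\tau$.

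Next I apply Lemma~\ref{lem:conj-nz} to $\psi$ with the roles of the players swapped. Its proof only uses the determinacy of the single queries $\NZ(\bar X_i)$ and the fact that the player who wants all the $\NZ$ objectives can randomize uniformly over individually winning strategies. Both facts are symmetric in the players, since the game is turn-based and player 2 may also randomize at the start. This gives two cases:
\begin{enumerate}
\item Player 2 has, for every $i$, a strategy $\tau_i$ ensuring $\NZ(\bar X_i)$. The uniform mixture $\tau$ of the $\tau_i$ then ensures $\psi$, so $\tau$ refutes $\phi$ against every $\sigma$.
\item For some $i$, player 1 has a strategy $\sigma_i$ refuting $\NZ(\bar X_i)$, that is, ensuring $\AS(X_i)$. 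Then $\sigma_i$ wins $\phi$.
\end{enumerate}
By determinacy of the single-objective queries (Proposition, part 1), exactly one of these cases holds. This yields determinacy of $\phi$ together with the stated characterization: player 1 wins $\phi$ iff player 1 wins some $\AS(X_i)$.

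The only delicate point is justifying the role swap. The paper states Lemma~\ref{lem:conj-nz} from player 1's perspective, and the mixing argument must be checked for player 2. I would handle this by observing that exchanging $S_1$ and $S_2$ produces a game of the same kind, and then invoking the lemma verbatim on that game. The remaining steps are routine.
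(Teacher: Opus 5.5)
Your proposal is correct and follows the paper's route. The paper obtains Corollary~\ref{cor:disj-as} by negating Lemma~\ref{lem:conj-nz} and noting that the negation of a determined query is determined; your explicit role swap (exchanging $S_1$ and $S_2$) and two-case argument spell out what the paper leaves implicit, namely that the uniform-mixing argument of Lemma~\ref{lem:conj-nz} works equally well for player 2.
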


Determinacy follows because the negation of a determined query is still determined. 
A separate proof for this can also be found in \cite{DisjunctionSG}.
Together with Lemma~\ref{lem:conj-as-reach}, it follows that positive 
Boolean combinations of almost sure reachability and safety queries are determined and PSPACE-complete.

\begin{lemma}[Positive Boolean $\AS$-reachability and $\AS$-safety]
    Let $\phi$ be a positive Boolean formula over almost-sure reachability and safety queries.
    $\phi$ is determined, and deciding which player has a winning strategy is PSPACE-complete.
\end{lemma}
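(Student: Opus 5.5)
The plan is to convert $\phi$ into disjunctive normal form, $\phi \equiv \bigvee_{k} \psi_k$, where each clause $\psi_k = \bigwedge_{i\in I_k} \AS(\lozenge T_i) \wedge \bigwedge_{j \in J_k} \AS(\square U_j)$ is a conjunction of almost-sure queries. This is sound because, for fixed $\sigma,\tau$, the query semantics is Boolean. Corollary~\ref{cor:disj-as} was stated for disjunctions of atomic $\AS(X_i)$, but its proof (the negation of Lemma~\ref{lem:conj-nz}) works for arbitrary objectives $X_i$. Moreover, a conjunction of almost-sure queries is itself a single almost-sure query: $\bigwedge_i \AS(X_i) \equiv \AS(\bigcap_i X_i)$. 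Hence $\phi \equiv \bigvee_k \AS(Y_k)$ with $Y_k = \bigcap_{i\in I_k}\lozenge T_i \cap \square \bigcap_j U_j$. Each $\AS(Y_k)$ is a Borel objective and therefore determined by Martin's theorem, as in the Determinacy Proposition. Corollary~\ref{cor:disj-as} then gives determinacy of $\phi$: player 1 wins $\phi$ iff player 1 wins some clause $\psi_k$.

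For the upper bound, we must decide whether player 1 wins some clause, without writing down the possibly exponential DNF. A clause is picked by choosing, at each $\vee$ node of $\phi$, one disjunct, so the clauses are the minimal ``satisfying selections'' of the formula tree. An NPSPACE (hence PSPACE) procedure guesses a clause of polynomial size and checks it. To check $\psi_k$, first use Lemma~\ref{lem:as-safe} (with $\psi$ the conjunction of the $\AS$-reachability queries) to pass to the restricted game $\mathcal{G}| \Sem{\AS(\square \bigcap_j U_j)}{1}$. That winning region is computable in polynomial time as a safety game in which player 2 controls the chance states. Then decide the conjunction of $\AS$-reachability queries in the restricted game. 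I would do this via the goal unfolding $\mathcal{G}'$: there, as in the proof of Lemma~\ref{lem:conj-as-one-nz-reach}, the conjunction becomes a single $\AS(\lozenge T')$ query with absorbing target.

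The main obstacle is this last step. The goal unfolding is exponential, so a standard polynomial-time almost-sure reachability algorithm on it gives only EXPTIME. I would first try to reuse the PSPACE procedure behind Lemma~\ref{lem:conj-as-one-nz-reach}. The unfolding's $b$-component is monotone, so the game decomposes into layers $S \times \{b\}$ of polynomial size, ordered by $b$. Winning for $\AS(\lozenge T')$ at a state $(s,b)$ depends only on layers above $b$. A recursive procedure can therefore compute membership of $(s,b)$ in the winning region: it solves the polynomial-size layer game in which exits to higher layers are treated as targets if those states are winning, with membership queried recursively. The recursion depth is at most $n+1$, each level uses polynomial space, and the total space is polynomial. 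Alternatively, Lemma~\ref{lem:conj-as-reach} gives the same inductive structure, with recursion on the set of remaining targets. PSPACE-hardness is inherited from conjunctions of almost-sure reachability in MDPs \cite{PercentileQueriesMDP}, which is a special case of $\phi$.
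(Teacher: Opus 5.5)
Your proposal follows the paper's proof essentially step for step. For determinacy you use the DNF, the identity $\AS(X_1)\wedge\AS(X_2)\equiv\AS(X_1\cap X_2)$ and Corollary~\ref{cor:disj-as}; for the upper bound you guess a clause, pass to the restricted game via Lemma~\ref{lem:as-safe}, and decide the remaining conjunction of $\AS$-reachability queries. Your layer-by-layer recursion on the goal unfolding simply spells out the PSPACE procedure that the paper attributes to Lemma~\ref{lem:conj-as-reach}.

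One small caution: Martin's theorem gives determinacy of values, not by itself the almost-sure/positive determinacy you need for each $\AS(Y_k)$. It is cleaner to justify that step via Lemmas~\ref{lem:conj-as-reach} and~\ref{lem:as-safe}, which establish determinacy of exactly these clause queries.
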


\begin{proof}
For determinancy convert $\phi$ to DNF and note that for any conjunction of almost sure objectives it holds that
$\sigma, \tau \models AS(X_1) \land AS(X_2)$ if and only if $\sigma, \tau \models AS(X_1 \land X_2)$.
Therefore $\phi$ is equivalent to a disjunction of almost sure queries.
Determinacy follows from Corollary~\ref{cor:disj-as}.

For membership in PSPACE, consider the following algorithm.
First, nondeterministically guess a satisfying assignment to $\phi$ and Let
$\phi' = \bigwedge_{i\in I} AS(\lozenge T_i) \land \bigwedge_{j\in J} AS(\square T_j)$
be a conjunction over the positive variables in that assignment.
From Lemma~\ref{lem:as-safe}, we know that player 1 has a winning strategy if and only if they have a winning strategy for
$\phi'' = \bigwedge_{i\in I} AS(\lozenge T_i)$ in the game $\mathcal{G}|\Sem{AS(\square \bigcap_{j\in J} T_j)}{1}$.
Since $\phi''$ is a conjunction of almost sure reachability queries, we can decide if player 1 has a winning strategy in PSPACE using Lemma~\ref{lem:conj-as-reach}.
If player 1 has no winning strategy for any satisfying assignment of $\phi$,
then it follows from Corollary~\ref{cor:disj-as} that player 1 has no winning strategy for $\phi$.
\end{proof}
\section{Nondetermined Queries}
\label{sec:nondet}

In this section, we show that the subclasses shown to be determined in Theorem~\ref{th:determinacy}, and their negations, form a maximal class.
That is, games with queries outside these classes are not determined: neither player may have a winning strategy.
Moreover, we show that the decision problem of determining if player 1 can win is at least NEXPTIME-hard.

\OMIT{
The previous section has shown that queries that only use almost sure queries (or only nonzero queries), or only use conjunctions (or only disjunctions) are determined and PSPACE complete.
In this section we show that queries that include both almost sure and nonzero queries, as well as both conjunctions and disjunctions can be nondetermined.
We show that this loss of determinancy corresponds to an increase in complexity be reducing the problem of satisfiability in Dependency Quantified Boolean Formulas (DQBF)\cite{HenkinDQBF} to deciding if player 1 has a winning strategy for a with a positive Boolean combination of almost sure and nonzero reachability queries in a stochastic game.
}

\OMIT{
For positive Boolean combinations of almost sure reachability, almost sure safety and nonzero reachability (excluding only nonzero safety), we show that winning strategies require at most an exponential number of memory states. It follows that in this case, deciding if player 1 has a winning strategy is $NEXPTIME-complete$.
The decidability of the full problem remains open.
}

We start with an example of a nondetermined query.
Consider the with the game from Figure~\ref{fig:nondetermined}. 
We show that the query $\phi = AS(\lozenge \{s_3\}) \lor (NZ(\lozenge \{s_2\})\land NZ(\lozenge \{s_4\}))$ is not determined.
The available strategies for both players in this game can be characterized as follows: player 1 either plays a strategy $\sigma_1$ that plays $s_1$ with probability 1, or a strategy $\sigma_2$ that plays $s_2$ with probability greater than 0. Symmetrically player 2 can either play a strategy $\tau_1$ than plays $s_3$ with probability 1, or a strategy $\tau_2$ that plays $s_4$ with positive probability.
It follows that $\sigma_1, \tau_1 \models \phi$ and $\sigma_2, \tau_2 \models \phi$, but $\sigma_1, \tau_2 \not\models \phi$ and $\sigma_2, \tau_1 \not\models \phi$. Therefore there are no winning strategies for either player, and $\phi$ is not determined.

This nondeterminancy example extends to the following classes of queries:
\begin{enumerate}
    \item positive Boolean combinations of AS and NZ reachability (example);
    \item Boolean combination of AS reachability with
        \begin{align*}
            \phi' = AS(\lozenge \{s_3\}) \lor (\lnot AS(\lozenge \{s_3, s_4\})\land NZ(\lozenge \{s_4\}))
        \end{align*}
    \item Boolean combinations of NZ reachability 
        \begin{align*}
           \phi'' = \lnot NZ(\lozenge \{s_2, s_4\}) \lor (NZ(\lozenge \{s_2\})\land NZ(\lozenge \{s_4\}))
        \end{align*}
\end{enumerate}
as well as for the qualitative multiple safety queries that arise from their respective negations.

\subsection{Hardness}

While the determined queries in the previous sections were PSPACE-complete, we show the extended classes of queries are NEXPTIME-hard.
To show  NEXPTIME-hardness, we use a reduction from Boolean formulas with Henkin quantifiers, 
known as \emph{Dependency Quantified Boolean Formulas (DQBF)} \cite{HenkinDQBF}. 
In comparison to a regular QBF, where a quantified variable always depends on exactly the previously quantified variables, 
in DQBF this dependency can be specified explicitly. We are concerned with S-form DQBFs:
$$ \Phi = \forall x_1, \dots , \forall x_n \exists y_{1, S_1} , \dots, \exists y_{m, S_m} \phi $$
where $X = \{x_1, \dots, x_n\}$ are the universally quantified variables, $Y = \{y_1, \dots, y_m\}$ the existentially quantified variables, and for each $j\in \{1, \dots, m\}$, the set $S_j\subset X$ is the set of variables that $y_j$ depends on. 
This formula is satisfied if and only if there are Skolem functions $Y_j: S_j \to \{0, 1\}$, such that the Skolemization $\forall x_1, \dots x_n: \phi[y_j \to Y_j(X)]$ is satisfied, where
$\phi[y_j \to Y_j(X)]$ denotes $\phi$ with each $y_j$ replaced by $Y_j(X)$. 
Satisfiability of a DQBF in S-Form is NEXPTIME-complete \cite{HenkinDQBF}.

\begin{theorem}[reduction from DQBF]
    \label{lem:hardness}
    Let $\phi$ be a positive Boolean combination of AS and NZ reachability queries. Deciding if player 1 has a winning strategy is NEXPTIME-hard.
\end{theorem}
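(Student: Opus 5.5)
The plan is a polynomial-time reduction from satisfiability of S-form DQBF. Given $\Phi = \forall x_1 \dots \forall x_n \exists y_{1,S_1} \dots \exists y_{m,S_m}\, \varphi$ with $\varphi$ in CNF, I will build a game $\mathcal{G}_\Phi$ and a positive Boolean combination $\psi_\Phi$ of AS and NZ reachability queries, both polynomial in $|\Phi|$. The goal is that player 1 wins $\psi_\Phi$ if and only if Skolem functions $Y_j$ exist. The guiding intuition comes from the nondetermined example of Figure~\ref{fig:nondetermined}. A disjunction of the form $AS(\lozenge A) \lor (NZ(\lozenge B) \land NZ(\lozenge C))$ lets player 2 punish a player 1 who randomizes between two options. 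Used in the right places, this forces player 1's choices to be pure (or "committed"). Player 2's freedom then plays the role of the universal quantifier.

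\textbf{Construction.} The game has two phases.
\begin{itemize}
\item[(a)] A \emph{commitment phase}. Chance picks an index $j$ and then, in a chain of chance states, the bits of $x|_{S_j}$ only. This chain is drawn as a layered DAG whose states remember just the bits of $S_j$ read so far, so every history reaching the decision gadget of $(j, x|_{S_j})$ carries no information about variables outside $S_j$. At this gadget player 1 picks $y_j \in \{0,1\}$ by moving to one of two sink-leading chance states.
\item[(b)] A \emph{verification phase}. Player 2 picks a full assignment $x$ bit by bit and then a clause $C$ of $\varphi$. Player 1 must then exhibit a literal of $C$ that is true. If the literal is universal it is checked directly. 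If it is an existential $y_j$, the play is routed into the same decision-gadget family, reached via the projection $x|_{S_j}$.
\end{itemize}
Polynomially many target sets will be used: per $j$, sets $T_j^0$ and $T_j^1$ collecting all "chose $y_j=b$" sinks, plus $T_{\mathrm{fail}}$-style sinks for the verification phase. The query $\psi_\Phi$ will be a conjunction, over $j$, of gadgets of the above nondetermined shape, together with an AS condition saying that the verification phase never ends in a falsified clause.

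\textbf{Correctness.} For the direction from a satisfying Skolemization to a winning strategy, player 1 plays $Y_j(x|_{S_j})$ deterministically at every decision gadget. I then check each conjunct of $\psi_\Phi$ against an arbitrary player 2 strategy; the AS conditions hold because $\varphi$ is true for every $x$. For the converse, given a winning $\sigma$, I extract from its behaviour at the commitment gadgets functions $Y_j$ depending only on $x|_{S_j}$. The layered construction guarantees that those histories contain nothing else. Then, against the player 2 strategy that picks a falsifying $x$ and clause, $\sigma$ must violate the AS conjunct.

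\textbf{Main obstacle.} The hardest step is \emph{consistency between phases}. Because the game has perfect information and $\sigma$ is history-dependent, in phase (b) player 1 has seen all of $x$. Nothing syntactic prevents $\sigma$ from answering differently there than in phase (a). This is precisely where the Henkin-style partial information has to come from the query rather than the arena. My first attempt is to let chance (not player 2) choose between phases at the very start, with positive probability each, and to make the two phases merge into the same decision states. Then a choice of $y_j$ in phase (b) that differs from the committed one produces, with positive probability, both $T_j^0$ and $T_j^1$. A conjunct $AS(\lozenge T_{\mathrm{ok}}) \lor \neg(\ldots)$, rewritten positively via dual sets as in the example $\phi'$, would penalize this. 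A naive per-$j$ condition like $AS(\lozenge T_j^0) \lor AS(\lozenge T_j^1)$ is too strong, since it forces $y_j$ constant. So the targets must be indexed by the merged decision state, and I must ensure this stays polynomial by letting a single target set collect "inconsistent" sinks across all assignments at once. If this fails, the fallback is to make player 2 choose the phase-(b) assignment but reveal it only through the projection $x|_{S_j}$ before the literal check, moving the full choice of $x$ after player 1's answer.
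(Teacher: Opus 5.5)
You reduce from S-form DQBF, as the paper does, but there is a genuine gap exactly at the step you flag as the main obstacle, and the repair you sketch cannot work. Because chance draws the bits of $x|_{S_j}$ in phase (a), a single strategy profile realizes every projection $b\in\{0,1\}^{S_j}$ with positive probability. So within one probability measure $y_j$ legitimately takes both values.

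Inconsistency between the phases is a relation between two \emph{different} plays: the phase-(a) play with bits $b$ and a phase-(b) play with $x|_{S_j}=b$. It is not a property of a single play. Hence no single target set can ``collect inconsistent sinks across all assignments''. Detecting inconsistency via $NZ(\lozenge T_{j,b,0})\land NZ(\lozenge T_{j,b,1})$ needs target sets indexed by $b$, i.e.\ exponentially many. Without this, your converse direction (extracting $Y_j$ from phase (a) and refuting the AS conjunct in phase (b)) is unsupported.

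The arena is exponential for the same reason. A layered DAG that remembers the bits of $S_j$, with one decision gadget per $(j,x|_{S_j})$, has $2^{|S_j|}$ states. Routing phase (b) to ``the gadget for $x|_{S_j}$'', or checking a universal literal ``directly'' against player 2's earlier choice, likewise requires states that encode parts of $x$. None of this is needed: in a perfect-information game the history already carries those bits, and the history-dependence of $\sigma$ supplies the function.

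The missing idea is to keep the universal quantifier \emph{outside} the probability measure and to force \emph{both} players to be consistent. You used the gadget of Figure~\ref{fig:nondetermined} only to discipline player 1. The paper lets player 2, not chance, choose the $x$-values, with targets $T_{v,\top},T_{v,\bot}$ collecting the ``$v$ true/false'' states of all branches. It adds the escape disjunct $\psi_2=\bigvee_{x\in X} NZ(\lozenge T_{x,\top})\land NZ(\lozenge T_{x,\bot})$, so player 1 wins whenever player 2 randomizes or answers differently in different places. Then every relevant profile induces a single assignment $\alpha$.

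With that in place, the condition $\psi_1=\bigwedge_{y\in Y} AS(\lozenge T_{y,\top})\lor AS(\lozenge T_{y,\bot})$, which you rejected as too strong, is exactly right. It forces $y$ to agree across all chance-selected plays for this one $\alpha$, not to be constant as a function of $x$. The dependency restriction then comes from module ordering alone. Chance selects one of $m$ branches, and branch $j$ visits the modules for $x\in S_j$, then $y_j$, then the remaining variables. So on branch $j$ player 1 fixes $y_j$ having seen only $\alpha|_{S_j}$, and $\psi_1$ makes every other branch agree.

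No verification phase or clause game is needed. Replacing each literal of the matrix by $AS(\lozenge T_{v,\top})$ or $AS(\lozenge T_{v,\bot})$ gives $\phi'$, and $(\phi'\land\psi_1)\lor\psi_2$ is a polynomial-size positive combination. Your fallback of revealing only $x|_{S_j}$ before $y_j$ points toward this ordering. However, without the two cross-branch consistency conditions, player 2 can give the same $x_i$ different values in different branches or phases, and the correspondence with Skolem functions breaks.
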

We describe the construction informally here, the full proof can be found in the appendix.
For a DQBF $\Phi$ in S-Form:
$$ \Phi = \forall x_1, \dots , \forall x_n \exists y_{1, S_1} , \dots, \exists y_{m, S_m} \phi$$
We construct a stochastic game where the initial state $s_0$ randomizes between $m$ branches.

Each branch $j\in\{1, \dots, m\}$ consists of $n+m$ modules, one for each variable. In each module for a variable $v\in X\cup Y$ we have a state controlled by player 1 if $v\in Y$ and by player 2 if $v\in X$, where the player chooses between a state that corresponds to setting the variable $v$ to true, and a state corresponding to setting the variable $v$ to false.

The modules in each branch are ordered according to the dependency induced by $S_j$: 
Each branch $j$ begins with the modules for $x\in S_j$, followed by the module for $y_j$, 
then the remaining modules $x\not\in S_j$ and finally the modules $y_i,\ i\neq j$.
The winning condition for the game $\psi$ is constructed by augmenting the formula $\phi$ with restrictions $\psi_1$ and $\psi_2$ that ensure that each player plays deterministically and identically on each branch:
\begin{align*}
    \psi = (\phi' \land \psi_1) \lor \psi_2
\end{align*}
where $\phi'$ is obtained from $\phi$ by replacing each positive variable $v$ with the almost sure reachability query for the states that set $v$ to true, and $\lnot v$ with the almost sure reachability query for the states that set $v$ to false.

Player 1 wins if player 2 does not follow the restriction (making $\psi_2$ true) or if they follow the restriction (making $\psi_1$ true) and the variable assignment determined by $\sigma, \tau$ satisfies $\phi'$. To show that player 1 has a winning strategy for $\psi$ in $\mathcal{G}_\Phi$, we use that a strategy of player 2 that follows these restrictions can be translated to an assignment to the variables $X$, and a strategy of player 1 strategy that satisfies these restrictions can be translated to Skolem functions $Y_j: S_j\to \{0, 1\}$.

Essentially the same proof works for any of the other query types previously shown to be nondetermined 
(e.g. Boolean combinations of AS reachability, Boolean combinations of NZ reachability, etc.).

\subsection{Membership}

While in general a strategy is a function depending on the entire play of the game so far, 
often strategies only need to remember limited information about the history of the play.
A strategy can be realized by a \emph{strategy automaton} \cite{DisjunctionSG} with a state space $M$ called memory. 
In each round, the strategy automaton updates its memory state $m\in M$ (potentially in a probabilistic way) based on the new state $s$ and, 
if it is the respective player's turn, outputs a distribution $\Delta(S)$ based on its memory state $m$ and the game state $s$.
For a strategy $\sigma$, the smallest $k\in \mathbb{N} \cup \{\infty\}$ such that there is a strategy automaton 
realizing $\sigma$ with $|M| = k$ is the memory size of $\sigma$.

To put an upper bound on the complexity, we aim at bounding the memory requirement:
Given a bound on the memory of a winning strategy, we can nondeterministically
guess a strategy of that bound and check if the strategy is winning in the induced MDP using the algorithms of \cite{MultiObjectiveMDP}.

We will show that for positive Boolean combinations that include $AS(\lozenge)$, $AS(\square)$ and $NZ(\lozenge)$ queries only require exponential memory, and therefore deciding if player 1 has a winning strategy is in NEXPTIME. 
For queries that also include $NZ(\square)$ the memory requirement, and hence decidability/complexity remains open.

In generalized reachability games \cite{fijalkow2010surprizing},
it is known that winning strategies only need to remember the set of target sets previously visited. 
We show that this is not sufficient for our class of queries.
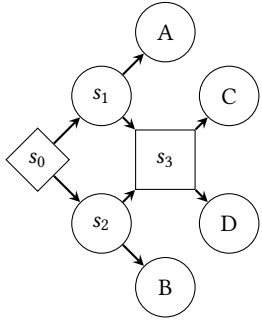
\begin{figure}[t]
    \centering
     \begin{tikzpicture}[node distance = 1.2cm, on grid, auto]
        \node (s1) [state, diamond] {$s_0$};
        \node (s2) [state, above right = of s1] {$s_1$};
        \node (s3) [state, below right = of s1] {$s_2$};
        \node (s4) [state, above right = of s2] {A};
        \node (s5) [state, below right = of s3] {B};
        \node (s6) [state, above right = of s3, rectangle] {$s_3$};
        \node (s7) [state, above right = of s6] {C};
        \node (s8) [state, below right = of s6] {D};

        \path[-stealth, thick]
            (s1) edge (s2) 
            (s1) edge (s3)

            (s2) edge (s4) 
            (s3) edge (s5)

            (s2) edge (s6) 
            (s3) edge (s6)

            (s6) edge (s7)
            (s6) edge (s8);
    \end{tikzpicture}
    \caption{Example game where a winning strategy requires memory other than visited target sets.}
    \label{fig:targetmemory}
\end{figure}

Consider in the game from Figure~(\ref{fig:targetmemory}) the objective
\begin{flalign*}
    (NZ(\lozenge A) \land NZ(\lozenge B))
    &\lor  (AS(\lozenge \{A, B, D\}) \land NZ(\lozenge D)) \\
    &\lor (AS(\lozenge \{A, B, C\}) \land NZ(\lozenge C)) 
\end{flalign*}
Since all target sets are terminal states, if memory of the target sets visited was sufficient, then player 1 should have a winning strategy if and only if they have a memoryless winning strategy.
If player 2 plays both $s_1$ and $s_2$ with positive probability, then $(NZ(\lozenge A) \land NZ(\lozenge B))$ and therefore $\phi$ is satisfied, regardless of player 1s actions.
Therefore we only have to consider spoiling strategies $\tau_A$ and$\tau_B$ that play $s_1$ and $s_2$ with probability 1 respectively.
For player 1, any strategy that randomizes between $C$ and $D$ will fail to satisfy $\phi$ against both $\tau_B$ and $\tau_A$.
Therefore there are two possible memoryless strategies for player 1: $\sigma_C$ and $\sigma_D$ which play $C$ and $D$ with probability 1 respectively, and neither is a winning strategy:
$\sigma_C, \tau_B \not\models \phi$ and $\sigma_D, \tau_A \not\models \phi$.
Player 1 does however have a winning strategy that uses memory: If $\sigma$ plays $C$ with probability 1 if $s_1$ has been played,
and plays $D$ with probability 1 if $s_2$ has been played. This strategy wins against $\tau_A$ by playing $C$, wins against $\tau_B$ by playing $D$ and trivially wins against any randomization.

For queries that do not include $NZ(\square)$, we show that remembering the subset of all visited states (not just target sets) is sufficient for a winning strategy. 
To show this, we assume there is a winning strategy $\sigma$ (with arbitrary memory) and construct a strategy $\bsigma$ from $\sigma$ that only uses the set of previously visited states as memory, and show that $\bsigma$ is still a winning strategy.

Let $\sigma$ be a strategy.
For a play $\pi$, let $\mathit{set}(\pi) =\{s\in S\mid \exists i: \pi_i = s\}$ be the set of states visited by $\pi$.
For a state $s\in S$ and set of states $M\subset S$, define the set $\mathit{resp}_\sigma(s, M)$ as:
\begin{flalign*}
\{s'\in A(s) \mid \exists \tau',\pi':
    \Pr^{\sigma, \tau'}(\pi's) > 0 
    \ \mathit{set}(\pi') = M 
    \sigma(\pi's)(s') > 0 \} 
\end{flalign*}
i.e., the actions that are played with positive probability by $\sigma$ after some play that reaches $s$ and visits exactly the states in $M$.
Let $\bsigma$ be the strategy derived from $\sigma$ where:
\begin{flalign*}
    \bsigma(\pi s)(s') = \begin{cases}
        \frac{1}{\mid \mathit{resp}_\sigma(s, \mathit{set}(\pi))\mid} 
            &\textrm{ if } s'\in \mathit{resp}_\sigma(s, \mathit{set}(\pi)) \\        
        0   &\textrm{ otherwise}
    \end{cases}
\end{flalign*}

\begin{lemma}
The strategy $\bsigma$ derived from $\sigma$ is well defined, and only uses exponential memory.
\end{lemma}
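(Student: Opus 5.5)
The lemma has two parts. First, $\bsigma$ is a legitimate strategy: the normalizing denominator must be nonzero and the support must lie in $A(s)$. Second, $\bsigma$ is realized by a strategy automaton whose memory is the set of visited states. I would treat these separately.

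For well-definedness, I first note that $\mathit{resp}_\sigma(s,M)\subseteq A(s)$ by definition. So whenever the set is nonempty, $\bsigma(\pi s)$ is the uniform distribution on a nonempty subset of $A(s)$, and its support condition holds. The delicate point is that $\mathit{resp}_\sigma(s,\mathit{set}(\pi))$ may be empty for a history $\pi s$ that has probability zero under $\sigma$ against every $\tau'$. I would handle this in two steps.

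\begin{enumerate}
\item Convention for unreachable histories: when the set is empty, let $\bsigma$ play the uniform distribution on $A(s)$, or any fixed element of $A(s)$. This is harmless because $A(s)\neq\emptyset$.
\item Reachable histories never hit the convention: every history $\pi s$ that is reached with positive probability under $\bsigma$ against some $\tau$ has a nonempty $\mathit{resp}_\sigma(s,\mathit{set}(\pi))$.
\end{enumerate}

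For step 2, I would argue by induction on the length of $\pi$. The claim is that if $\Pr^{\bsigma,\tau}(\pi s)>0$, then there exist $\tau'$ and $\pi'$ with $\mathit{set}(\pi')=\mathit{set}(\pi)$, last state of $\pi'$ equal to that of $\pi$, and $\Pr^{\sigma,\tau'}(\pi' s)>0$. In the inductive step:
\begin{itemize}
\item Chance and player-2 moves are matched directly, with $\tau'$ copying the move.
\item A player-1 move $s\to s'$ of $\bsigma$ lies in $\mathit{resp}_\sigma(s,M)$, so some witness $\pi''$ with $\mathit{set}(\pi'')=M$ ending in $s$ has $\sigma$ playing $s'$ with positive probability.
\end{itemize}
After the player-1 move, $\pi''s'$ has visited set $M\cup\{s'\}$, which equals $\mathit{set}(\pi s')$. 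The witnesses may differ from $\pi$, but only the pair (current state, visited set) is propagated, and that is exactly what is needed.

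For the memory bound, I would define the automaton with memory $M\subseteq S$, initialized to $\emptyset$ (or to $\{s_0\}$, according to the convention for $\mathit{set}$). On seeing a new state $s$, its output at a player-1 state is the uniform distribution on $\mathit{resp}_\sigma(s,M)$, or the fallback if that set is empty. The update is then $M\leftarrow M\cup\{s\}$. Since $\bsigma(\pi s)$ depends only on $s$ and $\mathit{set}(\pi)$, this automaton realizes $\bsigma$ with $2^{|S|}$ memory states, which is exponential in $|S|$.

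The main obstacle is the well-definedness step, in particular the off-by-one between $\mathit{set}(\pi)$ and $\mathit{set}(\pi s)$ in the definition, and making sure the inductive invariant lines up with it. If the direct induction becomes awkward, I would simply rely on the fallback convention: it makes $\bsigma$ well defined on every history regardless, and the memory argument does not depend on reachability at all.
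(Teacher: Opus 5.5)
Your proposal is correct and follows the paper's proof: you use the same invariant (every $\pi s$ reachable under $\bsigma,\tau$ has some $\tau',\pi'$ with $\Pr^{\sigma,\tau'}(\pi's)>0$ and $\mathit{set}(\pi')=\mathit{set}(\pi)$), proved by the same chance/player-2/player-1 induction, and the same observation that $\bsigma(\pi s)$ depends only on $(s,\mathit{set}(\pi))$, which gives $2^{|S|}$ memory states. Two small fixes: in the player-1 step the new witness is $\pi''s$, followed by $s'$, with $\mathit{set}(\pi''s)=M\cup\{s\}=\mathit{set}(\pi s)$, not $M\cup\{s'\}$ (your wording mixes the two conventions); and the fallback convention is fine for unreachable histories but cannot replace the induction, because the invariant is what the next lemma relies on and is also what ensures the fallback is never used on reachable histories.
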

\begin{proof}
    $\bsigma(\pi s)$ only depends on $\mathit{resp}_\sigma(s,
    \mathit{set}(\pi))$ and so only on $s$ and $\mathit{set}(\pi)$. Thus, the only memory $\bsigma$ requires is the set of previously visited states, which is exponential in the number of states.

    Let $\tau$ be any strategy for player 2 and $\pi s$ be a path such that $\Pr^{\bsigma, \tau}(\pi s) > 0$.
    We show that the set $\mathit{resp}_\sigma(s, \mathit{set}(\pi))$ is
    non-empty and thus $\bsigma$ is well defined. For this we prove:
    \begin{align*}
        \exists \tau', \pi': & Pr^{\sigma, \tau'}(\pi's) > 0,\ \mathit{set}(\pi) = \mathit{set}(\pi')\ \ \ \ \ \ (*)
    \end{align*}
    by induction over finite paths $\pi s$.
    
    $(*)$ trivially holds for $\pi s = \langle s_0 \rangle$ with $\tau' = \tau$ and $\pi' = \pi$.
    
    Assume $(*)$ holds for some $\pi s$ with strategy $\tau'$ and alternative path $\pi'$, and consider the path $\pi s s'$ for $s'\in A(s)$:
    
    If $s'\in S_c$, then by assumption $\Pr^{\sigma, \tau'}(\pi's) > 0$ and thus $\Pr^{\sigma, \tau'}(\pi'ss') = \Pr^{\sigma, \tau'}(\pi's) P(s)(s') > 0$.
    Since $\mathit{set}(\pi) = \mathit{set}(\pi')$ it also follows that $
    \mathit{set}(\pi s) = \mathit{set}(\pi' s')$ and so $(*)$ holds for $\pi s s'$.
    
    If $s'\in S_2$ and $\Pr^{\bsigma, \tau}(\pi ss') > 0$, then consider the strategy $\tau''$ that plays like $\tau'$ on the path $\pi'$ and then plays like $\tau$ everywhere else.
    It follows that $\Pr^{\sigma, \tau''}(\pi'ss') = \Pr^{\sigma, \tau'}(\pi's) \tau(\pi s) (s')> 0$.
    Since $\mathit{set}(\pi) = \mathit{set}(\pi')$  by assumption, it also follows that $ \mathit{set}(\pi s) = \mathit{set}(\pi' s)$ 
    and therefore $(*)$ holds for $\pi s s'$.
    
    If $s'\in S_1$ and $\Pr^{\bsigma, \tau}(\pi ss') > 0$, then it follows that $s'\in \mathit{resp}_\sigma(s, \mathit{set}(\pi))$. 
    By construction there exist $\tau', \pi'$ such that
    $\Pr^{\sigma, \tau'}(\pi's) > 0,\ \mathit{set}(\pi) = \mathit{set}(\pi')$, and $\sigma(\pi's)(s) > 0$. It follows that $\Pr^{\sigma, \tau'}(\pi ss') > 0$ and $\mathit{set}(\pi s) = \mathit{set}(\pi' s)$, so $(*)$ holds for $\pi ss'$.
    
    It follows that for any path $\pi s$ with $\Pr^{\bsigma, \tau}(\pi s) > 0$, the set $\mathit{resp}_\sigma(s, \mathit{set}(\pi))$ is non-empty and therefore $\bsigma$ is well defined.
\end{proof}

\begin{lemma}[exponential memory]
    \label{lem:bsigma_win}
    Let $\phi$ be a positive Boolean combination of AS reachability, AS safety and NZ reachability queries.
    If $\sigma$ is a winning strategy for $\phi$, then $\bsigma$ is a winning strategy for $\phi$.
\end{lemma}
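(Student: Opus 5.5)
Fix an arbitrary player 2 strategy $\btau$; we must show $\bsigma,\btau\models\phi$. Since $\phi$ is positive, it suffices to find one strategy $\tau'$ of player 2 such that every atomic query true under $\sigma,\tau'$ is also true under $\bsigma,\btau$. We know $\sigma,\tau'\models\phi$ because $\sigma$ is winning, so monotonicity of $\phi$ then gives the claim. The plan is to compare the supports of the two induced Markov chains. The key tool is the invariant $(*)$ from the previous lemma: every finite path $\pi s$ with positive probability under $\bsigma,\btau$ has a ``twin'' $\pi's$ with positive probability under $\sigma$ and some $\tau'$, and $\mathit{set}(\pi)=\mathit{set}(\pi')$. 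Reachability and safety only depend on $\mathit{set}$ of the play, so twins agree on all atomic objectives.

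I will treat the three atom types separately, each time by contraposition. First, take an $NZ(\lozenge T)$ atom. If it fails under $\bsigma,\btau$, then no positive-probability path of $\bsigma,\btau$ visits $T$. I need a $\tau'$ for which $\sigma,\tau'$ also never visits $T$ with positive probability. Second, take an $AS(\square T)$ atom. If it fails, some finite $\bsigma,\btau$ path leaves $T$. Its twin under $(*)$ gives a $\tau'$ with $\sigma,\tau'$ also leaving $T$ with positive probability. Third, take an $AS(\lozenge T)$ atom. If it fails, the set of $\bsigma,\btau$ plays avoiding $T$ forever has positive probability. I want a $\tau'$ under which $\sigma,\tau'$ also avoids $T$ with positive probability.

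The difficulty is that $\tau'$ must be a single strategy serving all falsified atoms at once. Falsifying $NZ(\lozenge T)$ is a universal statement about all positive paths, while the other two are existential. My choice of $\tau'$ is: player 2 maintains a current twin path, i.e. a $\bsigma,\btau$-history with the same visited set as the actual history, and plays as $\btau$ plays on that twin. Concretely, $\tau'$ copies $\btau$ on the history obtained by mapping each $\sigma$-path back to a $\bsigma$-path with the same $\mathit{set}$ and endpoint. I then show by induction, mirroring the proof of $(*)$ in the other direction, that every positive-probability path of $\sigma,\tau'$ has a $\bsigma,\btau$ twin. At player 1 states the $\sigma$-move lies in $\mathit{resp}_\sigma$, which is exactly the support of $\bsigma$. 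Chance and player 2 states are immediate. With this, the supports of the two chains coincide up to $\mathit{set}$, which handles both $NZ(\lozenge)$ and $AS(\square)$. I will need the twin map to be chosen canonically, for instance by fixing one representative history per pair (endpoint, visited set), so that $\tau'$ is well defined.

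The main obstacle is the $AS(\lozenge T)$ atom, since almost-sure reachability depends on probabilities and not just supports. My approach is a finite-memory argument in the product of $S$ with the visited set. Under $\bsigma$ with the canonical $\tau'$, the chain on pairs $(s,\mathit{set})$ is a finite Markov chain once $\btau$ is replaced by an analogous set-based strategy. Failure of $AS(\lozenge T)$ then means that some reachable bottom SCC avoids $T$, and this is a support property. I would first reduce to $\btau$ using memory only of $\mathit{set}$: for fixed $\bsigma$, player 2 faces an MDP, and the spoiling strategy against the remaining atoms can be taken with visited-set memory. This reduction is where I expect the most work. Once player 2's strategy also uses only visited-set memory, both chains are finite. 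They have the same graph up to $\mathit{set}$, so they have the same reachable bottom SCCs, and $\sigma,\tau'$ fails $AS(\lozenge T)$ too. Note that this reasoning would break for $NZ(\square)$, since positive-probability safety is not determined by the support graph; this explains why that atom is excluded.
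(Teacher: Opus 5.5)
\textbf{The proposal has a genuine gap, and it cannot be closed, because the statement itself is false.}

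Your induction proves only one inclusion: every positive-probability play of $\sigma,\tau'$ has a $\bsigma,\btau$-twin. That is essentially property $(**)$. Once $\btau$ uses only visited-set memory, as you later arrange, your canonical-twin strategy $\tau'$ is simply $\btau$. This inclusion handles the $\NZ(\lozenge)$ atoms, but your sentence ``the supports of the two chains coincide up to $\mathit{set}$'' does not follow from it.

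To transfer a violated $\AS(\square T)$ or $\AS(\lozenge T)$ atom you need the converse: the witnessing $\bsigma,\btau$-prefix must have a twin of positive probability under $\sigma,\tau'$. Property $(*)$ only gives such a twin under some other, path-dependent player-2 strategy. That strategy has to steer $\sigma$ into the differently ordered history that justifies $\bsigma$'s extra support. Separately, your bottom-SCC comparison is unfounded as written, since the $\sigma,\tau'$ chain is infinite when $\sigma$ has unbounded memory.

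The paper supplies the steering by mixing strategies $\tau_i^*$ that follow the $\sigma$-twin $\pi_i's_i$ and otherwise play $\btau$. The price is that one must then show the steering creates no new $\NZ$ witnesses.

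That step fails, and the statement is false, so neither your argument nor the paper's can be completed. Here is a counterexample.
\begin{itemize}
\item Player-2 states are $p$ (initial), $u$ and $w'$. There is a chance state $w$ with $P(w)(w')=P(w)(r)=\frac12$. Player-1 states are $r$ and $q$, and $t,t',x,y$ are terminals.
\item The moves are $A(p)=\set{u,w}$, $A(u)=\set{w,q}$, $A(w')=\set{u,q}$, $A(r)=\set{t,t'}$ and $A(q)=\set{x,y}$.
\item The query is $\phi=\AS(\square(S\setminus\set{y}))\lor\NZ(\lozenge\set{t})$.
\item $\sigma$ plays $t$ at $r$ iff $u$ has not yet been visited, and plays $x$ at $q$ iff the first move was $p\to u$.
\end{itemize}

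$\sigma$ is winning. If $\tau(p)(w)>0$, the play $p\,w\,r\,t$ has positive probability. Otherwise every play starts $p\,u$, and $y$ is never reached.

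$\bsigma$ is not winning. The histories $p\,u\,w\,w'\,q$ and $p\,w\,w'\,u\,q$ are both $\sigma$-reachable with prefix set $\set{p,u,w,w'}$, so $\mathit{resp}_\sigma(q,\set{p,u,w,w'})=\set{x,y}$. On the other hand, $\mathit{resp}_\sigma(r,\set{p,u,w})=\set{t'}$. Take the memoryless $\btau$ with $p\mapsto u$, $u\mapsto w$, $w'\mapsto q$. Against it, $\bsigma$ reaches $y$ with probability $\frac14$ and never reaches $t$.

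In this example your $\tau'$ is $\btau$, and $\sigma,\btau$ never visits $y$. More to the point, every player-2 strategy under which $\sigma$ visits $y$ must play $p\to w$ with positive probability, and that already makes $\NZ(\lozenge t)$ true. So no single $\tau'$ of the kind you want exists.

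The paper's proof breaks at the same place. Its check that $\NZ$ atoms true under $\sigma,\tau^*$ also hold under $\bsigma,\btau$ only treats plays extending $\pi_i's_i$. Plays that leave the steering prefix early, here $p\,w\,r\,t$, need not have any $\bsigma,\btau$-twin.
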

\begin{proof}
    We prove that $\bsigma$ is a winning strategy by assuming that there is a strategy $\tau$ such that $\bsigma, \tau \not\models \phi$ and
    showing a contradiction by constructing $\tau^*$ with $\sigma, \tau^* \not\models \phi$.
    
    This proof uses two important properties of $\bsigma$.
    First, from the previous proposition we know if there is a play $\pi s$ according to $\bsigma, \tau$,
    then there is a play $\pi's$ according to $\sigma, \tau'$ that visits the same set of states.
    Additionally from the construction of $\bsigma$ it is clear that any finite play
    that has positive probability according to $\sigma, \tau$ also has positive probability according to $\bsigma, \tau$.
    
    From the second property it follows that for any target set $T$ we have $\sigma, \tau \models NZ(\lozenge T) \implies \bsigma, \tau \models NZ(\lozenge T)$.
    Therefore there must be some almost sure safety or reachability queries $\phi_i$, $i\in I^*$ in $\phi$ such that $\bsigma, \tau \not\models \phi_i$.
    Using the first property, we then construct a strategy $\tau^*_i$ for each of these queries such that $\sigma, tau^*_i\not\models \phi_i$,
    and let $\tau^*$ be the strategy that chooses to play any $\tau^*_i$ at random.
    
    We then show that $\sigma, \tau^* \models NZ(\lozenge T) \implies \bsigma, \tau \models NZ(\lozenge T)$.    
    It follows that $\sigma, \tau^* \not\models \phi$ which is a contradiction to $\sigma$ being a winning strategy.
    Therefore the assumption of $\bsigma$ not winning against $\tau$ is false, and $\bsigma$ is a winning strategy.
    Details are in the appendix.
\end{proof}

This construction does however not work with nonzero safety targets,
as $\sigma, \tau \models NZ(\square T) \implies \bsigma, \tau \models NZ(\square T)$ does not hold.
This can be seen in a simple example game where player 1 can choose to either stay in the initial state $s_0$ or choose to move to a terminal state $s_1$.
A strategy $\sigma_k$, $k\in \mathbb{N}$ that randomizes between staying in $s_0$ and moving to $s_1$ for the first $k$ iterations,
and then chooses to play $s_0$ forever is a winning strategy for $NZ(\square s_0)$,
but $\bsigma$ constructed from $\sigma$ would randomize uniformly between $s_0$ and $s_1$ at every step,
eventually reaching $s_1$ with probability 1 and not satisfying $NZ(\square s_0)$.

Since exponential memory is sufficient for positive Boolean combinations that do not include $NZ(\square T)$ queries,
we can decide if player 1 has a winning strategy in NEXPTIME.

\begin{theorem}
    Let $\phi$ be a positive Boolean combination of almost sure reachability, almost sure safety and nonzero reachability queries.
    Deciding if player 1 has a winning strategy is NEXPTIME-complete.
\end{theorem}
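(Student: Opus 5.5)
Hardness and membership are handled separately. For hardness, I would invoke Theorem~\ref{lem:hardness}. The queries produced by the DQBF reduction are positive Boolean combinations of $AS$ and $NZ$ reachability queries. Those are a special case of the class in the statement, so NEXPTIME-hardness carries over directly.

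For membership, I would follow the plan sketched before the statement: guess a strategy whose memory is the set of visited states, then verify it in the induced MDP. By Lemma~\ref{lem:bsigma_win}, if player 1 has any winning strategy $\sigma$, then the derived strategy $\bsigma$ is also winning. The strategy $\bsigma$ has a specific form. It depends only on the current state $s$ and the set $\mathit{set}(\pi)\subseteq S$ of previously visited states. At each such pair it plays the uniform distribution over a nonempty subset $\mathit{resp}_\sigma(s,\mathit{set}(\pi))\subseteq A(s)$.

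Such a strategy is described by a function $f: S_1\times 2^S \to 2^S\setminus\{\emptyset\}$ with $f(s,M)\subseteq A(s)$. This is a table of size $2^{|S|}\cdot|S|\cdot|S|$, which is exponential in the input. The NEXPTIME algorithm therefore guesses $f$ and defines $\bsigma_f$ to play uniformly on $f(s,M)$. The exact probabilities are irrelevant for the qualitative queries anyway, since only supports matter, but uniform choice gives a concrete finite-memory strategy. Conversely, any guessed $f$ that passes verification is a genuine winning strategy, so the procedure is sound.

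For verification, I would build the product of $\mathcal{G}$ with the memory $2^S$. Its states are pairs $(s,M)$, and the memory update is $M\mapsto M\cup\{s\}$. Fixing $\bsigma_f$ turns this product into an MDP for player 2 of exponential size, with reachability and safety targets lifted to the product. Player 1 wins iff there is no player 2 strategy $\tau$ with $\bsigma_f,\tau\models\neg\phi$. Here $\neg\phi$ is a positive Boolean combination of $NZ(\square)$, $NZ(\lozenge)$ and $AS(\square)$ queries, and satisfiability of it in an MDP is decidable by \cite{MultiObjectiveMDP} in time polynomial in the MDP. Alternatively, one can convert $\neg\phi$ to DNF, guess a disjunct, and solve each conjunction of qualitative queries on the MDP with standard graph algorithms. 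In either case the check runs in exponential time, so the whole procedure is in NEXPTIME.

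The main obstacle is making the verification step rigorous: checking that the MDP algorithms apply to the qualitative Boolean combination in the exponential product and stay within exponential time. The cleanest route I would try first is a case analysis on a guessed disjunct of the DNF of $\neg\phi$. An $AS(\square)$ conjunct restricts the MDP to the region where safety can be ensured. $NZ(\lozenge)$ conjuncts can be handled by randomizing over strategies, as in Lemma~\ref{lem:conj-nz}. A single $NZ(\square)$ conjunct amounts to reaching with positive probability an end component inside the safe set. Combining several $NZ(\square)$ conjuncts with the other constraints requires the multi-objective MDP results of \cite{MultiObjectiveMDP}, which I would cite rather than reprove. Each of these steps is polynomial in the MDP, and the MDP has exponential size, so the total cost is exponential.
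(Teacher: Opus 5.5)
Your proposal is correct and follows the paper's own proof. Hardness is inherited from the DQBF reduction of Theorem~\ref{lem:hardness}, whose queries use only $AS(\lozenge)$ and $NZ(\lozenge)$. Membership guesses the exponential-size visited-set strategy $\bsigma$ justified by Lemma~\ref{lem:bsigma_win}, then checks in exponential time via \cite{MultiObjectiveMDP} that player~2 cannot satisfy $\neg\phi$ in the induced MDP.

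One small slip is in your alternative verification route. You must certify that \emph{no} disjunct of the DNF of $\neg\phi$ is satisfiable, so guessing a single disjunct has the quantifier the wrong way round. Instead, deterministically check all of the at most exponentially many disjuncts, which still stays within exponential time.
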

\begin{proof}
    From the previous proposition we know that if a winning strategy $\sigma$ exists,
    then there exists a winning strategy $\bsigma$ that only uses exponential memory, and randomizes uniformly at every state.
    We guess such an exponentially sized strategy $\bsigma$,
    and verify in exponential time that player 2 does not have a winning strategy for $\lnot\phi$ in the MDP $\mathcal{G}_{\bsigma}$,
    which can be done in exponential time using the algorithm from \cite{MultiObjectiveMDP}.
\end{proof}

It remains open whether positive Boolean combinations including $NZ\square$ can be decided in NEXPTIME,
but we motivate why it is not trivial to extend our result.
Consider again the previous example where $\bsigma$ fails for a simple $NZ(\square T)$ in a game with 2 states.
The two strategies $\sigma_1(s_0^k)(s_1) = \frac{1}{k}$
and $\sigma_1(s_0^k)(s_1) = \frac{1}{k^2}$ both play $s_1$ with positive probability at every step, but with arbitrarily small probability overall.
However, the probability of reaching $s_1$ with $\sigma_1$ is $\lim \frac{k - 1}{k} = 1$,
while the probability of reaching $s_1$ with $\sigma_2$ is $\lim \frac{k - 1}{2k} = \frac{1}{2} < 1$.
A construction of $\bsigma$ that works for queries that include $NZ\square$ would need to be able to distinguish between these two strategies.

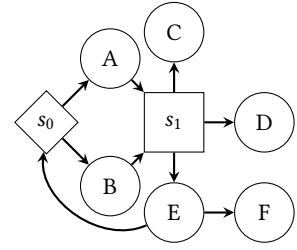
\begin{figure}[t]
    \centering
    \begin{tikzpicture}[scale=0.4, node distance = 1.2cm, on grid, auto]
        \node (s1) [state, diamond] {$s_0$};
        \node (s2) [state, above right = of s1] {A};
        \node (s3) [state, below right = of s1] {B};
        \node (s4) [state, above right = of s3, rectangle] {$s_1$};

        \node (s5) [state, above = of s4] {C};
        \node (s6) [state, right = of s4] {D};

        \node (s7) [state, below = of s4] {E};
        \node (s8) [state, right = of s7] {F};

        \path[-stealth, thick]
            (s1) edge (s2) 
            (s1) edge (s3)

            (s2) edge (s4) 
            (s3) edge (s4)

            (s4) edge (s5)
            (s4) edge (s6)
            (s4) edge (s7)

            (s7) edge (s8)
            (s7) edge[bend left=60] (s1);
    \end{tikzpicture}
    \caption{Example game where winning strategies require memory other than the set of visited states.}
    \label{fig:statememory}
\end{figure}

Additionally, we show that simply remembering all visited states is not sufficient if we include $\NZ(\square)$ queries.
Consider in the game from Figure~(\ref{fig:statememory}) the objective $\phi = \phi_1 \lor \phi_2 \lor \phi_3 \lor \phi_4$ where:
\begin{flalign*}
    \phi_1 = & AS(\lozenge A) \land NZ(\lozenge B) \land NZ(\lozenge C) \land AS(\square \overline{D}) & \\
    \phi_1 = & NZ(\lozenge A) \land AS(\lozenge B) \land AS(\square \overline{C}) \land NZ(\lozenge D) & \\
    \phi_3 = & NZ(\square \overline{A}) \land NZ(\square \overline{B}) & \\
    \phi_4 = & AS(\lozenge F) \land (AS(\square \overline{A})\lor AS(\square \overline{B})) &
\end{flalign*}
$\phi_3$ ensures that at the start of the game, player 2 plays either $A$ or $B$ with probability 1, while $\phi_4$ ensures that if player 1 never plays $C$ or $D$
(and therefore the game almost surely reaches $F$), player 2 has to eventually both $A$ and $B$.
The strategy $\sigma$ that plays $E$ with probability 1 until both $A$ and $B$ have been reached and then plays $C$ if $A$ was reached before $B$ and $D$ otherwise is a winning strategy.
However, a strategy that only uses the set of visited states as memory can not distinguish between these two cases, 
and there is no winning strategy with that memory structure.
We do not know if a different exponential-sized memory structure is sufficient to solve queries that include $NZ(\square)$ and therefore if these queries are also in NEXPTIME.



\begin{acks}
    We thank Daniel Stan and anonymous reviewers for their valuable feedback.
    This work is in part supported by the EU 
    \includegraphics[width=0.50cm]{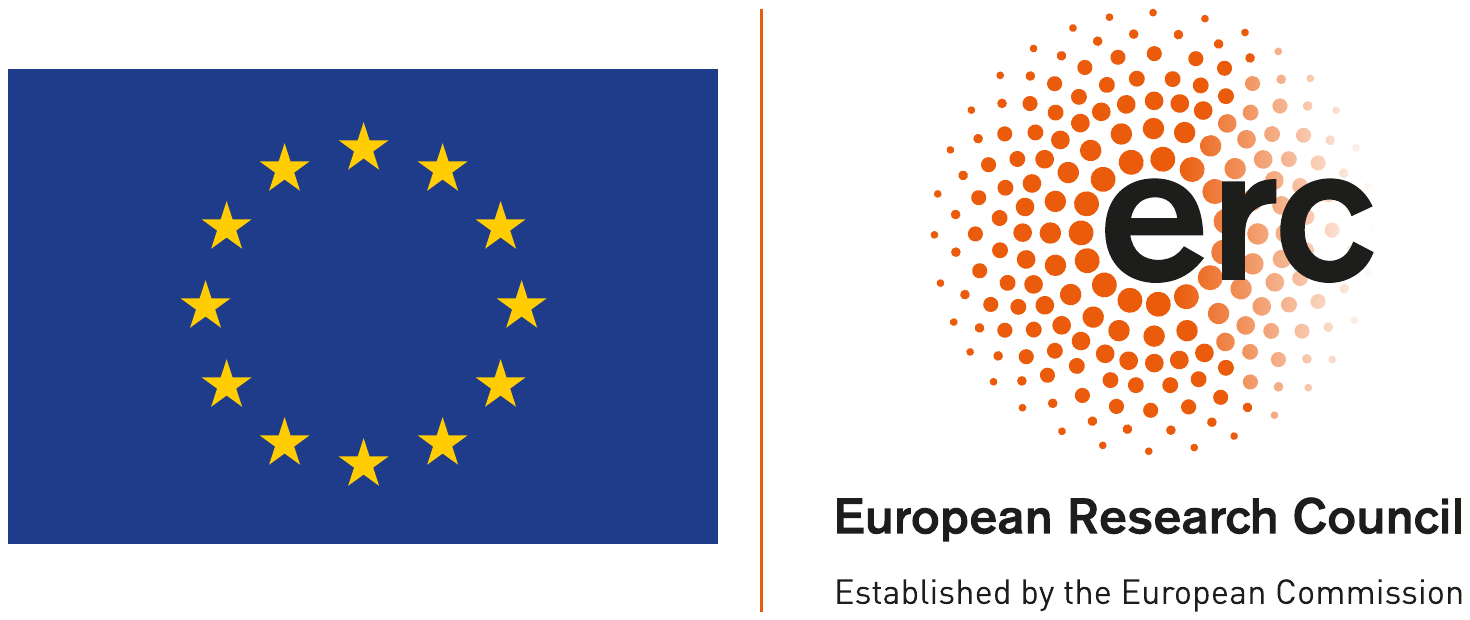}
    (ERC, LASD, grant number 101089343,
    \url{https://doi.org/10.3030/101089343}) and the 
    DFG project 389792660 TRR 248–CPEC.
    Views and opinions expressed are however those
    of the author(s) only and do not necessarily reflect those of the European
    Union or the European Research Council Executive Agency. Neither the
    European Union nor the granting authority can be held responsible for them.
\end{acks}


\balance
\bibliographystyle{ACM-Reference-Format} 
\bibliography{refs.bib}

@inproceedings{Krish-papers-qualitative-and-quantitative,
  author       = {Krishnendu Chatterjee and
                  Marcin Jurdzinski and
                  Thomas A. Henzinger},
  editor       = {J. Ian Munro},
  title        = {Quantitative stochastic parity games},
  booktitle    = {Proceedings of the Fifteenth Annual {ACM-SIAM} Symposium on Discrete
                  Algorithms, {SODA} 2004, New Orleans, Louisiana, USA, January 11-14,
                  2004},
  pages        = {121--130},
  publisher    = {{SIAM}},
  year         = {2004},
  url          = {http://dl.acm.org/citation.cfm?id=982792.982808},
  bibsource    = {dblp computer science bibliography, https://dblp.org}
}

@book{FilarVrieze,
title = "Competitive Markov Decision Processes - Theory, Algorithms and Applications",
author = "Jerzy Filar and Koos Vrieze",
year = "1997",
language = "English",
publisher = "Springer ",

}

@article{ChatterjeeHenzinger2012,
  author       = {Krishnendu Chatterjee and
                  Thomas A. Henzinger},
  title        = {A survey of stochastic {\(\omega\)}-regular games},
  journal      = {J. Comput. Syst. Sci.},
  volume       = {78},
  number       = {2},
  pages        = {394--413},
  year         = {2012},
  url          = {https://doi.org/10.1016/j.jcss.2011.05.002},
  doi          = {10.1016/J.JCSS.2011.05.002},
  bibsource    = {dblp computer science bibliography, https://dblp.org}
}

@inproceedings{dAHK99,
  author       = {Luca de Alfaro},
  editor       = {Jos C. M. Baeten and
                  Sjouke Mauw},
  title        = {Computing Minimum and Maximum Reachability Times in Probabilistic
                  Systems},
  booktitle    = {{CONCUR} '99: Concurrency Theory, 10th International Conference, Eindhoven,
                  The Netherlands, August 24-27, 1999, Proceedings},
  series       = {Lecture Notes in Computer Science},
  volume       = {1664},
  pages        = {66--81},
  publisher    = {Springer},
  year         = {1999},
  url          = {https://doi.org/10.1007/3-540-48320-9\_7},
  doi          = {10.1007/3-540-48320-9\_7},
  bibsource    = {dblp computer science bibliography, https://dblp.org}
}

@inproceedings{dAH00,
  author       = {Luca de Alfaro and
                  Thomas A. Henzinger},
  title        = {Concurrent Omega-Regular Games},
  booktitle    = {15th Annual {IEEE} Symposium on Logic in Computer Science, Santa Barbara,
                  California, USA, June 26-29, 2000},
  pages        = {141--154},
  publisher    = {{IEEE} Computer Society},
  year         = {2000},
  url          = {https://doi.org/10.1109/LICS.2000.855763},
  doi          = {10.1109/LICS.2000.855763},
  bibsource    = {dblp computer science bibliography, https://dblp.org}
}

@article{Martin98,
  author       = {Donald A. Martin},
  title        = {The Determinacy of Blackwell Games},
  journal      = {J. Symb. Log.},
  volume       = {63},
  number       = {4},
  pages        = {1565--1581},
  year         = {1998},
  url          = {https://doi.org/10.2307/2586667},
  doi          = {10.2307/2586667},
  bibsource    = {dblp computer science bibliography, https://dblp.org}
}

@article{MultiObjectiveMDP,
  author       = {Kousha Etessami and
                  Marta Z. Kwiatkowska and
                  Moshe Y. Vardi and
                  Mihalis Yannakakis},
  title        = {Multi-Objective Model Checking of Markov Decision Processes},
  journal      = {Log. Methods Comput. Sci.},
  volume       = {4},
  number       = {4},
  year         = {2008},
  url          = {https://doi.org/10.2168/LMCS-4(4:8)2008},
  doi          = {10.2168/LMCS-4(4:8)2008},
  bibsource    = {dblp computer science bibliography, https://dblp.org}
}

@article{PercentileQueriesMDP,
  author       = {Mickael Randour and
                  Jean{-}Fran{\c{c}}ois Raskin and
                  Ocan Sankur},
  title        = {Percentile queries in multi-dimensional Markov decision processes},
  journal      = {Formal Methods Syst. Des.},
  volume       = {50},
  number       = {2-3},
  pages        = {207--248},
  year         = {2017},
  url          = {https://doi.org/10.1007/s10703-016-0262-7},
  doi          = {10.1007/S10703-016-0262-7},
  bibsource    = {dblp computer science bibliography, https://dblp.org}
}

@article{DisjunctionSG,
  author       = {Tobias Winkler and
                  Maximilian Weininger},
  editor       = {Pierre Ganty and
                  Davide Bresolin},
  title        = {Stochastic Games with Disjunctions of Multiple Objectives},
  booktitle    = {Proceedings 12th International Symposium on Games, Automata, Logics,
                  and Formal Verification, GandALF 2021, Padua, Italy, 20-22 September
                  2021},
  series       = {{EPTCS}},
  volume       = {346},
  pages        = {83--100},
  year         = {2021},
  url          = {https://doi.org/10.4204/EPTCS.346.6},
  doi          = {10.4204/EPTCS.346.6},
  bibsource    = {dblp computer science bibliography, https://dblp.org}
}

@InProceedings{OnStochasticGames,
 author       = {Taolue Chen and
                  Vojtech Forejt and
                  Marta Z. Kwiatkowska and
                  Aistis Simaitis and
                  Clemens Wiltsche},
  editor       = {Krishnendu Chatterjee and
                  Jir{\'{\i}} Sgall},
  title        = {On Stochastic Games with Multiple Objectives},
  booktitle    = {Mathematical Foundations of Computer Science 2013 - 38th International
                  Symposium, {MFCS} 2013, Klosterneuburg, Austria, August 26-30, 2013.
                  Proceedings},
  series       = {Lecture Notes in Computer Science},
  volume       = {8087},
  pages        = {266--277},
  publisher    = {Springer},
  year         = {2013},
  url          = {https://doi.org/10.1007/978-3-642-40313-2\_25},
  doi          = {10.1007/978-3-642-40313-2\_25},
  bibsource    = {dblp computer science bibliography, https://dblp.org}
}

@article{DecidabilitySG,
  author       = {Romain Brenguier and
                  Vojtech Forejt},
  editor       = {Cyrille Artho and
                  Axel Legay and
                  Doron Peled},
  title        = {Decidability Results for Multi-objective Stochastic Games},
  booktitle    = {Automated Technology for Verification and Analysis - 14th International
                  Symposium, {ATVA} 2016, Chiba, Japan, October 17-20, 2016, Proceedings},
  series       = {Lecture Notes in Computer Science},
  volume       = {9938},
  pages        = {227--243},
  year         = {2016},
  url          = {https://doi.org/10.1007/978-3-319-46520-3\_15},
  doi          = {10.1007/978-3-319-46520-3\_15},
  bibsource    = {dblp computer science bibliography, https://dblp.org}
}

@article{HenkinDQBF,
title = {Henkin quantifiers and Boolean formulae: A certification perspective o\
f DQBF},
journal = {Theoretical Computer Science},
volume = {523},
pages = {86-100},
year = {2014},
issn = {0304-3975},
doi = {https://doi.org/10.1016/j.tcs.2013.12.020},
url = {https://www.sciencedirect.com/science/article/pii/S0304397513009328},
author = {Valeriy Balabanov and Hui-Ju Katherine Chiang and Jie-Hong R. Jiang},
}

@misc{StochasticLossyChannel,
      author       = {Daniel Stan and
                  Muhammad Najib and
                  Anthony Widjaja Lin and
                  Parosh Aziz Abdulla},
  editor       = {Aniello Murano and
                  Alexandra Silva},
  title        = {Concurrent Stochastic Lossy Channel Games},
  booktitle    = {32nd {EACSL} Annual Conference on Computer Science Logic, {CSL} 2024,
                  February 19-23, 2024, Naples, Italy},
  series       = {LIPIcs},
  volume       = {288},
  pages        = {46:1--46:19},
  publisher    = {Schloss Dagstuhl - Leibniz-Zentrum f{\"{u}}r Informatik},
  year         = {2024},
  url          = {https://doi.org/10.4230/LIPIcs.CSL.2024.46},
  doi          = {10.4230/LIPICS.CSL.2024.46},
  bibsource    = {dblp computer science bibliography, https://dblp.org} 
}

@inproceedings{ashok2020approximating,
  title={Approximating values of generalized-reachability stochastic games},
  author={Ashok, Pranav and Chatterjee, Krishnendu and K{\v{r}}et{\'\i}nsk{\`y}, Jan and Weininger, Maximilian and Winkler, Tobias},
  booktitle={Proceedings of the 35th Annual ACM/IEEE Symposium on Logic in Computer Science},
  pages={102--115},
  year={2020}
}

@article{fijalkow2010surprizing,
  TITLE = {{The surprizing complexity of generalized reachability games}},
  AUTHOR = {Fijalkow, Nathana{\"e}l and Horn, Florian},
  URL = {https://hal.science/hal-00525762},
  NOTE = {working paper or preprint},
  YEAR = {2010},
  MONTH = Oct,
  HAL_ID = {hal-00525762},
  HAL_VERSION = {v2},
}

@article{Shapley,
author = {L. S. Shapley },
title = {Stochastic Games},
journal = {Proceedings of the National Academy of Sciences},
volume = {39},
number = {10},
pages = {1095-1100},
year = {1953},
doi = {10.1073/pnas.39.10.1095},
URL = {https://www.pnas.org/doi/abs/10.1073/pnas.39.10.1095},
eprint = {https://www.pnas.org/doi/pdf/10.1073/pnas.39.10.1095}}

@article{MixingMDP,
  author       = {Rapha{\"{e}}l Berthon and
                  Shibashis Guha and
                  Jean{-}Fran{\c{c}}ois Raskin},
  editor       = {Holger Hermanns and
                  Lijun Zhang and
                  Naoki Kobayashi and
                  Dale Miller},
  title        = {Mixing Probabilistic and non-Probabilistic Objectives in Markov Decision
                  Processes},
  booktitle    = {{LICS} '20: 35th Annual {ACM/IEEE} Symposium on Logic in Computer
                  Science, Saarbr{\"{u}}cken, Germany, July 8-11, 2020},
  pages        = {195--208},
  publisher    = {{ACM}},
  year         = {2020},
  url          = {https://doi.org/10.1145/3373718.3394805},
  doi          = {10.1145/3373718.3394805},
  bibsource    = {dblp computer science bibliography, https://dblp.org}
}

@inproceedings{chen2012playing,
  title={Playing stochastic games precisely},
  author={Chen, Taolue and Forejt, Vojt{\v{e}}ch and Kwiatkowska, Marta and Simaitis, Aistis and Trivedi, Ashutosh and Ummels, Michael},
  booktitle={International Conference on Concurrency Theory},
  pages={348--363},
  year={2012},
  organization={Springer}
}

@inproceedings{kr21,
  author       = {Julian Gutierrez and
                  Lewis Hammond and
                  Anthony W. Lin and
                  Muhammad Najib and
                  Michael J. Wooldridge},
  title        = {Rational Verification for Probabilistic Systems},
  booktitle    = {{KR}},
  pages        = {312--322},
  year         = {2021}
}

@inproceedings{prism-games2,
  author       = {Marta Kwiatkowska and
                  David Parker and
                  Clemens Wiltsche},
  editor       = {Marsha Chechik and
                  Jean{-}Fran{\c{c}}ois Raskin},
  title        = {PRISM-Games 2.0: {A} Tool for Multi-objective Strategy Synthesis for
                  Stochastic Games},
  booktitle    = {Tools and Algorithms for the Construction and Analysis of Systems
                  - 22nd International Conference, {TACAS} 2016, Held as Part of the
                  European Joint Conferences on Theory and Practice of Software, {ETAPS}
                  2016, Eindhoven, The Netherlands, April 2-8, 2016, Proceedings},
  series       = {Lecture Notes in Computer Science},
  volume       = {9636},
  pages        = {560--566},
  publisher    = {Springer},
  year         = {2016},
  url          = {https://doi.org/10.1007/978-3-662-49674-9\_35},
  doi          = {10.1007/978-3-662-49674-9\_35},
  bibsource    = {dblp computer science bibliography, https://dblp.org}
}

@inproceedings{LR16,
  author       = {Anthony W. Lin and
                  Philipp R{\"{u}}mmer},
  title        = {Liveness of Randomised Parameterised Systems under Arbitrary Schedulers},
  booktitle    = {{CAV} {(2)}},
  series       = {Lecture Notes in Computer Science},
  volume       = {9780},
  pages        = {112--133},
  publisher    = {Springer},
  year         = {2016}
}

@inproceedings{LLMR17,
  author       = {Ondrej Leng{\'{a}}l and
                  Anthony Widjaja Lin and
                  Rupak Majumdar and
                  Philipp R{\"{u}}mmer},
  title        = {Fair Termination for Parameterized Probabilistic Concurrent Systems},
  booktitle    = {{TACAS} {(1)}},
  series       = {Lecture Notes in Computer Science},
  volume       = {10205},
  pages        = {499--517},
  year         = {2017}
}

@book{Lynch-book,
    title = {Distributed Algorithms},
    author = {Nancy Lynch},
    year = {1996},
    publisher = {Morgan Kaufmann}
}

@book{Fokkink-book,
    title = {Distributed Algorithms},
    author = {Wan Fokkink},
    year = {2013},
    publisher={MIT Press}
}

@article{MMSZ20,
  author       = {Rupak Majumdar and
                  Kaushik Mallik and
                  Anne{-}Kathrin Schmuck and
                  Damien Zufferey},
  title        = {Assume-Guarantee Distributed Synthesis},
  journal      = {{IEEE} Trans. Comput. Aided Des. Integr. Circuits Syst.},
  volume       = {39},
  number       = {11},
  pages        = {3215--3226},
  year         = {2020},
  url          = {https://doi.org/10.1109/TCAD.2020.3012641},
  doi          = {10.1109/TCAD.2020.3012641},
  bibsource    = {dblp computer science bibliography, https://dblp.org}
}

\longshort{
\appendix

\clearpage
\balance
\section{Appendix}
\subsection{Proof of Lemma~\ref{lem:conj-as-reach}}
    We show via induction that the lemma holds for any conjunction of up to $n$ objectives.
    For $n=1$, $\phi' = \phi = AS(\lozenge T_1)$, which is a single almost sure reachability query and therefore determined.
    
    Assume now that the lemma holds for any conjunction with up to $k$ objectives, and consider
    a query $\phi = \bigwedge_{i = 1}^{k+1} AS(\lozenge T_i)$.
    Since $AS(\lozenge T')$ is determined, we consider two cases: either player 1 has a winning strategy $\sigma'$, or player 2 has a winning strategy $\tau'$. We show that in both cases, the same player has a winning strategy for $\phi$.\\
    
    Assume $\sigma'$ is a winning strategy for $AS(\lozenge T')$ and let $\tau$ be any strategy of player 2.
    From the construction of $T'$ it follows that for every $i\in \{1, \dots, k+1\}$, $s\in T'\cap T_i$, there exists a strategy $\sigma_s$ such that
    $\sigma_s, \tau \models_{\mathcal{G}_s} \bigwedge_{j\neq i}AS(\lozenge T_j)$. Let $\sigma$ be the strategy that plays $\sigma'$ until some $x\in T'$
    is reached and then switches to playing $\sigma_s$, and for $s\in T'$ let $\lozenge_1 s$ denote the plays where $s$ is reached before any other $s'\in T', s'\neq s$.
    It follows that for $i\in \{1, \dots , k+1\}$:
    \begin{flalign*}
        \Pr^{\sigma, \tau}(\lozenge T_i) & \geq \Pr^{\sigma, \tau}(\lozenge T_i | \lozenge T') \\
        & = \sum_{s\in T'} \Pr^{\sigma, \tau}(\lozenge_1 s) \cdot \Pr^{\sigma, \tau}(\lozenge T_i | \lozenge_1 s) & \\
        & = \sum_{s\in T'\cap T_i} \Pr^{\sigma, \tau}(\lozenge_1 s) + \sum_{s\in T'\setminus T_i} \Pr^{\sigma, \tau}(\lozenge_1 s) \cdot \Pr^{\sigma, \tau}(\lozenge T_i | \lozenge_1 s) & \\
        & = \sum_{s\in T'\cap T_i} \Pr^{\sigma', \tau}(\lozenge_1 s) + \sum_{s\in T'\setminus T_i} \Pr^{\sigma', \tau}(\lozenge_1 s) \cdot \Pr^{\sigma_s, \tau}_{\mathcal{G}_s}(\lozenge T_i) & \\
        & = \Pr^{\sigma', \tau}(\lozenge T') = 1 &
    \end{flalign*}
    Therefore $\sigma$ is a winning strategy for $\phi$.

    Assume now instead that player 2 has a winning strategy $\tau'$ for $AS(\lozenge T')$ and let $\sigma$ be any strategy of player 1.
    From the induction hypothesis, it follows that each query of the the form $\bigwedge_{j\neq i}AS(\lozenge T_j)$ is determined, as it is a conjunction of at most $k$ objectives.
    Together with the construction of $T'$ it follows that for every $s\in T_i\setminus T'$, there is a strategy $\tau_s$ such that $\sigma, \tau_s \not\models_{\mathcal{G}_s} AS(\lozenge T_j)$ for some $j\neq i$. 
    Let $\tau$ be the strategy that plays $\tau'$
    until some $s\in T_i\setminus T'$ is reached and then switches to playing $\tau_s$.
    Let $T = \bigcup T_i$. If $\Pr^{\sigma, \tau}(\lozenge T) < 1$, then it follows that $\sigma, \tau\not\models \phi$.
    Assume otherwise $\Pr^{\sigma, \tau}(\lozenge T) = 1$ and let $\lozenge_1 s$ denote the plays where $s\in T$ is reached before any other $s'\in T$, $s'\neq s$.
    If $\forall s\in T$ with $\Pr^{\sigma, \tau'}(\lozenge_1 s) > 0$ it holds that $s\in T'$, then it follows that $\Pr^{\sigma, \tau'}(\lozenge T') = 1$, which is a contradiction to $\tau'$ being a winning strategy for $AS(\lozenge T')$ for player 2. Therefore there exists $j\in \{1, \dots k+1\}$, $s\in T_j\setminus T'$, with $\Pr^{\sigma, \tau'}(\lozenge_1 s) > 0$. It follows that:
    \begin{flalign*}
        \Pr^{\sigma, \tau}(\lnot \lozenge T_j) & \geq \Pr^{\sigma, \tau}(\lnot \lozenge T_j | \lozenge_1 s) \cdot \Pr^{\sigma, \tau}(\lozenge_1 s) \\
         & = \Pr^{\sigma, \tau_s}_{\mathcal{G}_s}(\lnot \lozenge T_j) \Pr^{\sigma, \tau'}(\lozenge_1 s) > 0 &
    \end{flalign*}
    Therefore $\tau$ is a winning strategy for player 2 for $\phi$.
    
    It follows that $\phi$ is determined, and player 1 has a winning strategy if and only if they have a winning strategy for $AS(\lozenge T')$. Therefore the lemma holds for $n = k+1$, and
    by induction it holds for all $n\in \mathbb{N}$.
\subsection{Algorithm for Lemma~\ref{lem:conj-as-one-nz-reach}}
\begin{algorithm}[t]
    \caption{$\mathit{win}((s, b), h)$ \label{algo:win}}
    \begin{algorithmic}
        \\
        \If{$b_i = 1$ for all $i$}
            \State \Return $\mathit{true}$
        \EndIf
        \If{$(s, b) \in h$}
            \State \Return $\mathit{false}$
        \EndIf
        \\
        \If{$s\in S_1$}
            \State \Return $\exists s'\in A(s): \mathit{win}((s', b'), h\cup \{(s, b)\})$
        \ElsIf{$s\in S_2$}
            \State \Return $\forall s'\in A(s): \mathit{win}((s', b'), h\cup \{(s, b)\})$
        \ElsIf{$s\in S_c$}
            \If{$s$ is in the winning region of $\mathcal{G}^*$ for $\bigwedge_{i\in I} AS(\lozenge T_i)$}
                \State \Return $\exists s'\in A(s): \mathit{win}((s', b'), h\cup \{(s, b)\})$
            \Else
                \State \Return $\mathit{false}$
            \EndIf
        \EndIf
    \end{algorithmic}
\end{algorithm}

Algorithm 1 shows the algorithm described in Lemma~\ref{lem:conj-as-one-nz-reach}. This algorithm checks if there is a tree in the goal unfolding of $\mathcal{G}$ which branches universally at states of player 2 where every leaf is in $T_0'$ and each node of the tree is in the winning region of $\bigwedge_{i\in I} AS(\lozenge T_i)$. While the goal unfolding is exponential in size, the length of any simple path is at most linear in the size of the original state space $|S|$. Since the algorithm terminates on a branch after encountering a loop, it requires only polynomial space.
\subsection{Proof of Lemma~\ref{lem:as-safe}}
\begin{proof}    
    First note that by construction of $\mathcal{G'}$, it follows that $\forall s\in S_2': s_\bot\not\in A(s)$ and $\forall s\in S_c': P(s)(s_\bot) = 0$.
    This is because, otherwise, player 1 does not win $AS(\square T)$ from $s$. Therefore a play of $\mathcal{G'}$ only reaches $s_\bot$ after a state $s\in S_1'$.
    
    Assume player 1 has a winning strategy $\sigma$ for $\psi$ in $\mathcal{G'}$. Let $\sigma'$ be the strategy of player 1 that plays according to $\sigma$ except at any path $\pi s$
    where $\sigma(\pi s)(s_\bot) > 0$, where 
    instead of playing $s_\bot$ the strategy switches to playing $\sigma_s$, the winning strategy for $AS(\square T)$ in $\mathcal{G}_s$.
    Since $s_\bot$ is not in any of the reachability targets of $\psi$, and no play of the strategy $\sigma'$ leaves $\Sem{AS(\square T)}{1}$,    
    it follows that $\sigma'$ is a winning strategy for $\phi$ in $\mathcal{G}$.
    
    Assume now player 2 has a winning strategy $\tau$ for $\psi$ in $\mathcal{G'}$ and for each $s\not\in \Sem{AS(\square T)}{1}$ let $\tau_s$ be the winning strategy
    of player 2 for $AS(\square T)$. The strategy $\tau'$ that plays $\tau$ until some $s\not\in \Sem{AS(\square T)}{1}$ is reached is then a winning strategy for player 2 for $\phi$:
    For any strategy $\sigma$ of player 1, either $\sigma, \tau' \models_{\mathcal{G}} AS(\square T)$,
    in which case $\sigma, \tau' \not\models_{\mathcal{G}} \phi$ because $\sigma, \tau\not\models_{\mathcal{G}'} \psi$,
    or $\sigma, \tau' \not\models_{\mathcal{G}} AS(\square T)$ and therefore $\sigma, \tau' \not\models \phi$.
    
    It follows that $\phi$ is determined, and player 1 has a winning strategy for $\phi$ in $\mathcal{G}$ if and only if player 1 has a winning strategy for $\psi$ in $\mathcal{G}'$.
\end{proof}

\subsection{Proof of Theorem~\ref{lem:hardness}}

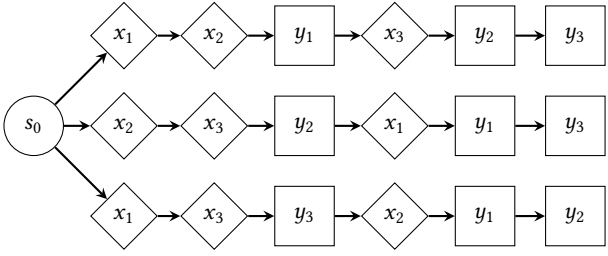
\begin{figure}[t]
    \centering    
    \begin{tikzpicture}[scale=0.4, node distance = 1.2cm, on grid, auto]
        \node (b11) [state, diamond] {$x_1$};        
        \node (b21) [state, below = of b11, diamond] {$x_2$};
        \node (b31) [state, below = of b21, diamond] {$x_1$};
        \node (s0) [state, left = of b21] {$s_0$};
        
        \node (b12) [state, right = of b11, diamond] {$x_2$};
        \node (b13) [state, right = of b12, rectangle] {$y_1$};
        \node (b14) [state, right = of b13, diamond] {$x_3$};
        \node (b15) [state, right = of b14, rectangle] {$y_2$};
        \node (b16) [state, right = of b15, rectangle] {$y_3$};
        
        \node (b22) [state, right = of b21, diamond] {$x_3$};
        \node (b23) [state, right = of b22, rectangle] {$y_2$};
        \node (b24) [state, right = of b23, diamond] {$x_1$};
        \node (b25) [state, right = of b24, rectangle] {$y_1$};
        \node (b26) [state, right = of b25, rectangle] {$y_3$};

        \node (b32) [state, right = of b31, diamond] {$x_3$};
        \node (b33) [state, right = of b32, rectangle] {$y_3$};
        \node (b34) [state, right = of b33, diamond] {$x_2$};
        \node (b35) [state, right = of b34, rectangle] {$y_1$};
        \node (b36) [state, right = of b35, rectangle] {$y_2$};

        \path[-stealth, thick]
            (s0) edge (b11) 
            (s0) edge (b21) 
            (s0) edge (b31)
            
            (b11) edge (b12)
            (b12) edge (b13)
            (b13) edge (b14)
            (b14) edge (b15)
            (b15) edge (b16)

            (b21) edge (b22)
            (b22) edge (b23)
            (b23) edge (b24)
            (b24) edge (b25)
            (b25) edge (b26)

            (b31) edge (b32)
            (b32) edge (b33)
            (b33) edge (b34)
            (b34) edge (b35)
            (b35) edge (b36);            
    \end{tikzpicture}
\caption{Example construction for a DQBF formula
        $\Phi = \forall x_1, x_2, x_3 \exists y_{1, S_1}\exists y_{2, S_2}\exists y_{3, S_3} \phi$
        where $S_1 = \{x_1, x_2\}$, $S_2 = \{x_2, x_3\}$, $S_3 = \{x_1, x_3\}$. Each node annotated with a variable represents the module for that variable.}
    \label{fig:dqbfconstruction}
\end{figure}

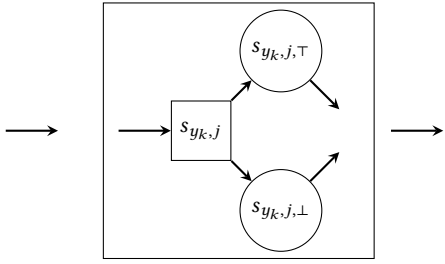
\begin{figure}[t]
    \begin{tikzpicture}[scale=0.4, node distance = 1.5cm, on grid, auto]
        
        \node (s0) [state, draw=none] {};   
        \node (s1) [state, right = of s0, rectangle] {$s_{y_k, j}$};
        \node (s2) [state, above right = of s1] {$s_{y_k, j, \top}$};
        \node (s3) [state, below right = of s1] {$s_{y_k, j, \bot}$};
        \node (s4) [state, above right = of s3, draw=none] {};
        \node (s5) [state, right = of s4, draw=none] {};
        \node (s6) [state, left = of s0, draw=none] {};

        \draw (0.5,-4.25) rectangle (9.5,4.25);
        \path[-stealth, thick]
            (s0) edge (s1) 
            (s1) edge (s2) 
            (s1) edge (s3)
            (s2) edge (s4)   
            (s3) edge (s4)    
            (s4) edge (s5)  
            (s6) edge (s0);
    \end{tikzpicture}
    \caption{Construction for the module of a variable $y_k$ on branch $j$. In a module for a variable $x_k$, the node $s_{x_k, j}$ is instead controlled by player 2.}
    \label{fig:dqbfmodule}
\end{figure}
Let $\Phi$ be a DQBF in S-Form.
$$ \Phi = \forall x_1, \dots , \forall x_n \exists y_{1, S_1} , \dots, \exists y_{m, S_m} \phi$$
We construct a stochastic game $\mathcal{G}_\Phi$ where player 1 has a winning strategy if and only if $\Phi$ is satisfied.

The game consists of the following states: The initial state $s_0\in S_c$, which randomly selects one of $m$ branches.
Each branch $j\in \set{1,\ldots, m}$ consists of $n+m$ modules, one for each variable, ordered according to the set $S_j$:
First we have the modules for each $x_i\in S_j$, followed by the module for $y_j$, then the modules for each $x_i\not\in S_j$ and finally all modules for $y_i$ with $i\neq j$.

The module for a variable $v$ in branch $j$ consists of a state $s_{v, j}$ controlled by player 1 if $v\in Y$, or by player 2 if $v\in X$. these states have transitions to states $s_{v, j, \top}$ and $s_{v, j, \bot}$
which each have a single transition to the next module in the branch (or are the leaves of the branch, if they are in the final module). We construct the winning condition in a way such that visiting such a state corresponds to setting the variable to true or false respectively.

The winning condition $\psi$ for $\mathcal{G}_\Phi$ is derived from the formula $\Phi$ as follows:
For a variable $v\in X\cup Y$, let $T_{v, \top} = \{s_{v, j, \top} | j\in \{1, \dots, m\}\}$ and $T_{v, \bot} = \{s_{v, j, \bot} | j\in \{1, \dots, m\}\}$.
In $\phi$, we first move all negation to the variables and then replace each positive occurrence of $v$ in $\phi$ with $AS(\lozenge T_{v, \top})$ and each negative occurrence with $AS(\lozenge T_{v, \bot})$ to obtain the formula $\phi'$.
Additionally, we force both players to play deterministically and identically in each branch, by adding restrictions $\psi_1$, $\psi_2$ for each player:
$$\psi_1 = \bigwedge_{y\in Y} AS(\lozenge T_{y, \top}) \lor AS(\lozenge T_{y, \bot})$$
$$\psi_2 = \bigvee_{x\in X} NZ(\lozenge T_{x, \top}) \land NZ(\lozenge T_{x, \bot})$$
Player 1 wins the game by achieving $\phi'$ while satisfying $\psi_1$, or if player 2s strategy satisfies $\psi_2$ (by not playing deterministically and identically):
$$\psi = ( \phi' \land \psi_1)  \lor \psi_2 $$
$\psi$ is now a positive Boolean combination of almost sure and nonzero reachability queries.

We now show that there is a Skolemization that satisfies $\Phi$ if and only if player 1 has a winning strategy for $\psi$ in $\mathcal{G}_\Phi$.
For this note that a strategy $\sigma$ of player 1 that satisfies $\psi_1$ (by playing deterministically and identically on each branch) can be translated to a set of Skolem functions $Y_j: S_j \to \{0, 1\}$ and vice versa:
$$Y_i(\alpha) = 1 \iff \sigma(\pi s_{y_j, j})(s_{v, j, \top}) = 1$$
where $\alpha$ is the assignment to $x_i\in S_j$ such that $\alpha(x_i) = 1$ iff $s_{x_i, j, \top}\in \pi$.
At $s_{y_j, j}$ on branch $j$, the strategy $\sigma$ chooses to play either $s_{y_j, j, \top}$ or $s_{y_j, j, \bot}$ with probability 1, based on the current history of the play up to $s_{y_j, j}$, which includes exactly the choices made by player 2 for all $x_i\in S_j$.
The choices made by $\sigma$ at $s_{y_j, i}$, $j\neq i$, can be defined identically, since on branch $i\neq j$, all modules for variables $x_i\in S_j$ are seen before $s_{y_j, i}$.

We show that player 1 has a winning strategy for $\psi$ in $\mathcal{G}_\Phi$ if and only if $\Phi$ is satisfied.

Assume $\sigma$ is a winning strategy for $\psi$ in $\mathcal{G}_\Phi$
and let $\tau'$ be any deterministic strategy of player 2 that plays identically at each branch. Since $\sigma$ is a winning strategy and $\sigma, \tau' \not\models \psi_2$, it follows that $\sigma, \tau' \models \psi_1$ and therefore $\sigma$ is a deterministic strategy that plays identically at each branch.
Let $Y_j$ be the Skolem functions constructed from $\sigma$, let $\alpha$ be any assignment to the variables $X$ and let $\tau$ be the strategy that plays $s_{x, j, \top} \iff \alpha(x_i) = 1$. Now $\tau$ is a deterministic strategy that plays identically at each branch and therefore $\sigma, \tau \not\models \psi_2$. It follows that $\sigma, \tau \models \phi'$ and therefore $\phi[x_i \to \alpha, y_j \to Y_j(\alpha)]$ is satisfied.
Therefore $\forall x_1, \dots x_n: \phi[y_j \to Y_j(X)]$ is satisfied, and $\Phi$ is satisfied.

Assume now $\Phi$ is satisfied. Then there exist Skolem functions $Y_j$ such that $\forall x_1, \dots x_n: \phi[y_j \to Y_j(X)]$ is satisfied.
Let $\sigma$ be the strategy constructed from these Skolem functions.
Let $\tau$ be any strategy of player 2. If $\tau$ is not a deterministic strategy that plays identically on each branch, then it follows that $\sigma, \tau \models \psi_2$ and therefore $\sigma, \tau \models \phi$.
Otherwise assume $\tau$ is a deterministic strategy that plays identically on each branch. It follows that $\sigma, \tau \models \psi_1$ by construction of $\sigma$ from $Y_j$.
Let $\alpha$ be the assignment such that $\alpha(x_i) = 1$ iff $\tau$ plays $s_{x, j, \top}$ with probability 1 ($\forall j$).
It follows that $\phi[x_i\to \alpha, y_i \to Y_i(\alpha)]$ is satisfied and therefore $\sigma, \tau \models \phi'$ and $\sigma, \tau \models \psi$. Therefore $\sigma$ is a winning strategy for $\psi$ in $\mathcal{G}_\Phi$. \hfill$\square$.

\subsection{Proof of Lemma~\ref{lem:bsigma_win}}
We first consider a query $\phi$ that is a positive Boolean combination of $AS\lozenge$ and $NZ\lozenge$, and later discuss how this is generalized to include $AS\square$ as well.

Overview:
We assume that $\bsigma$ is not a winning strategy and let $\tau$ be a strategy of player 2 such that $\bsigma, \tau \not\models \phi$. We first show that then there also exists a strategy $\btau$ that uses the same memory as $\bsigma$ such that $\bsigma, \btau \not\models \phi$.
We then show that $\sigma, \btau \models NZ(\lozenge T) \implies \bsigma, \btau \models NZ(\lozenge T)$. It follows that there are some almost sure queries $AS(\lozenge T_i)$ in $\phi$ such that $\bsigma, \btau \not\models AS(\lozenge T_i)$. From this we construct a strategy $\tau^*$ with $\sigma, \tau^* \not\models AS(\lozenge T_i)$ for any almost sure query not satisfied by $\bsigma, \btau$.
We then show that $\sigma, \tau^* \models NZ(\lozenge T) \implies \bsigma, \btau \models NZ(\lozenge T)$, which implies that $\sigma, \tau^* \not\models \phi$. This is a contradiction to $\sigma$ being a winning strategy, therefore $\bsigma$ must be a winning strategy.

Assume $\bsigma$ is not a winning strategy and let $\tau$ be a strategy of player 2 such that $\bsigma, \tau \not\models \phi$.

It follows that $\tau \models_{\mathcal{G}_{\bsigma}} \lnot \phi$, where $\mathcal{G}_{\bsigma}$ is the MDP induced by $\bsigma$. Since $\bsigma$ only requires exponential memory, $\mathcal{G}_{\bsigma}$ is finite
and there exists a strategy $\btau$ with $\btau \models_{\mathcal{G}_{\bsigma}} \lnot \phi$ that only uses the set of previously visited states as memory in $\mathcal{G}$ \cite{MultiObjectiveMDP}.
It follows that $\bsigma, \btau \not\models \phi$.

Consider that by construction of $\bsigma$:
\begin{align*}
    \sigma(\pi s)(s') > 0 \implies \bsigma(\pi s)(s') > 0\ \ \ \  (**)
\end{align*}
Therefore for any finite path $\pi$ with $\Pr^{\sigma, \tau}(\pi) > 0$, it follows that  $\Pr^{\bsigma, \tau}(\pi) > 0$.
It follows that $\sigma, \btau \models NZ(\lozenge T) \implies \bsigma, \btau \models NZ(\lozenge T)$ for any target set $T$.

Since $\sigma$ is a winning strategy, $\sigma, \btau \models \phi$. Since however $\bsigma, \btau \not\models \phi$, it follows that there are some almost sure targets
$AS(\lozenge T)$ in $\phi$ with $\bsigma, \btau\not\models AS(\lozenge T)$.
Let $I^* = \{ i| \bsigma, \btau \not\models AS(\lozenge T_i)\} \neq \emptyset$ and consider some target $i\in I^*$.

Since $\bsigma$ and $\btau$ only use exponential memory, the induced Markov Chain $\mathcal{M}_\mathcal{G}^{\bsigma, \btau}$ is finite.
It follows that for each $i\in I^*$, there is a path $\pi_i s_i$ such that
$\Pr^{\bsigma, \btau}(\pi_i s_i) > 0$ and $\Pr^{\bsigma, \btau}(\lozenge T_i | \pi_i s_i) = 0$.
Let $\tau_i', \pi_i'$ such that $\Pr^{\sigma, \tau_i'}(\pi_i' s_i) > 0$ and $set(\pi_i) = set(\pi_i')$, which exists due to $(*)$.

Let then $\tau_i^*$ be the strategy that on
any prefix of $\pi_i's_i$ plays the next action in the path with probability $\frac{1}{2}$ and with probability $\frac{1}{2}$ plays $\btau$, and plays $\btau$ everywhere else.
It follows that $\Pr^{\sigma, \tau_i^*}(\pi_i' s_i) > 0$ and $\Pr^{\sigma, \tau_i^*}(\lozenge T_i | \pi_i' s_i) = \Pr^{\sigma, \btau}(\lozenge T_i | \pi_i' s_i) = 0$
since $\Pr^{\bsigma, \btau}(\lozenge T_i | \pi_i s_i) = 0$ due to $(**)$, and $\bsigma, \btau$ have the same memory at $\pi_i s_i$ and $\pi_i's_i$.
Therefore $\sigma, \tau_i^* \not\models AS(\lozenge T_i)$.

Let $\tau^*$ be the strategy that plays each $\tau_i$ with probability $\frac{1}{|I^*|}$.
Then it follows that $\sigma, \tau^* \not\models AS(\lozenge T_i)$ for all $i\in I^*$, and therefore $\sigma, \tau^* \models AS(\lozenge T_i) \implies \bsigma, \btau \models AS(\lozenge T_i)$.

Consider an objective $NZ(\lozenge T)$ with $\sigma, \tau^* \models NZ(\lozenge T)$. Then there exists a path $\pi\in \lozenge T$ with $\Pr^{\sigma, \tau^*}(\pi) > 0$
and therefore $\Pr^{\sigma, \tau^*_i}(\pi) > 0$ for some $i$.
If $\pi_i's_i$ is a prefix of $\pi$, then let $\pi = \pi_i' s_i \pi_2$ and $\pi^* = \pi_i s_i \pi_2$.
Since $\pi \in \lozenge T$ and $set(\pi_i) = set(\pi_i')$, it follows that $\pi^* \in \lozenge T$.
Consider that since $\Pr^{\sigma, \tau_i^*}(\pi_i' s_i \pi_2 | \pi_i' s_i) > 0$, and $\tau_i^* = \btau$ everywhere except on $\pi_i' s_i$ it follows that $\Pr^{\bsigma, \btau}(\pi_i s_i \pi_2 | \pi_i s_i) > 0$, because both $\bsigma$ and $\btau$ have the same memory after $\pi_i$ and $\pi_i'$.
It follows that $\Pr^{\bsigma, \btau}(\pi^*) = \Pr^{\bsigma, \btau}(\pi_i s_i) \Pr^{\bsigma, \btau}(\pi_i s_i \pi_2 | \pi_i s_i) > 0$ and therefore $\bsigma, \btau \models NZ(\lozenge T)$.

We now have that $\sigma, \tau^* \models NZ(\lozenge T_i) \implies \bsigma, \btau \models NZ(\lozenge T_i)$ and $\sigma, \tau^* \models AS(\lozenge T_i) \implies \bsigma, \btau \models AS(\lozenge T_i)$ for any target sets in $\phi$.
Since $\phi$ is a positive Boolean formula and $\bsigma, \btau \not\models \phi$, it follows that $\sigma, \tau^* \not\models \phi$ which is a contradiction to $\sigma$ being a winning strategy.

Therefore the assumption that $\bsigma$ is not a winning strategy is false, and $\bsigma$ is a winning strategy.

We note that the construction can easily be extended to also include almost sure safety queries, by letting $I^* = I^*_\lozenge \cup I^*_\square$ to also include $AS(\square T_i)$ targets that are not satisfied by $\bsigma, \btau$.
For $i\in I^*_\square$, $\tau_i^*$ is then the strategy such that $\sigma, \tau_i^*$ reaches $s_i\not\in T_i$ with positive probability through a path $\pi_i s_i$, which exists due to $(**)$.

We then construct $\tau^*$ as a randomization over strategies that spoil almost sure reachability and those that spoil almost sure safety targets, and both
$\sigma, \tau^* \models AS(\lozenge T_i) \implies \bsigma, \btau \models AS(\lozenge T_i)$
and
$\sigma, \tau^* \models AS(\square T_i) \implies \bsigma, \btau \models AS(\square T_i)$
follow as before.
}
{}

\end{document}
